\newcommand{\errors}{\calE}
\newcommand{\errorchan}{\calN}
\newcommand{\channels}[1]{\mathsf{Channels}(#1)}
\newcommand{\Enc}{\mathsf{Enc}}
\newcommand{\Dec}{\mathsf{Dec}}
\newcommand{\Span}{\mathrm{span}}
\title{Haar random codes attain the quantum Hamming bound, approximately}
\author{Fermi Ma\thanks{Simons Institute $\&$ UC Berkeley. Email: \texttt{fermima1@gmail.com}.} \and Xinyu Tan\thanks{MIT. Email: \texttt{norahtan@mit.edu}.} \and John Wright\thanks{UC Berkeley. Email: \texttt{jswright@berkeley.edu}.}}
\date{}
\begin{document}

\maketitle

\begin{abstract}
We study the error correcting properties of Haar random codes,
in which a $K$-dimensional code space $\bC \subseteq \C^N$ is chosen at random from the Haar distribution.
Our main result is that Haar random codes can approximately correct errors up to the quantum Hamming bound,
meaning that a set of $m$ Pauli errors can be approximately corrected so long as $mK \ll N$.
This is the strongest bound known for any family of quantum error correcting codes (QECs),
and continues a line of work showing that approximate QECs can significantly outperform exact QECs~\cite{LNCY97,CGS05,BGG24}.
Our proof relies on a recent matrix concentration result of Bandeira, Boedihardjo, and van Handel~\cite{BBV23}.
\end{abstract}

\hypersetup{linktocpage}
\tableofcontents

\section{Introduction}

One of the foundational achievements of quantum computing is the development of quantum error-correcting codes (QECs).
These are the basic building blocks of fault tolerance and a necessary ingredient in the development of scalable quantum computers,
but they have also found applications to areas as diverse as quantum complexity theory and many-body physics.
In 1997,
Leung, Nielsen, Chuang, and Yamamoto~\cite{LNCY97} introduced a relaxation of QECs known as an \emph{approximate} quantum error-correcting codes (AQECs), 
in which a codeword, after being subjected to noise,
is only required to be recovered \emph{approximately} rather than exactly.
In their work, they proved the remarkable fact that AQECs can sometimes correct more errors than exact QECs are capable of.
Follow-up works of Cr\'{e}peau, Gottesman, and Smith~\cite{CGS05} and Bergamaschi, Golowich, and Gunn~\cite{BGG24} proved strong quantitative versions of this fact, which state that AQECs can achieve parameters well beyond those ruled out for exact QECs by the quantum Singleton bound.
The latter of these works, in particular, gives a construction of an AQEC which encodes $k$ qudits into $n$ qudits and can correct errors on roughly $(n-k)/2$ of the qudits, which is \emph{twice} what is possible with an exact QEC due to the quantum Singleton bound.

We study the error correcting properties of \emph{Haar random codes}.
In these codes, message states $\ket{m} \in \C^K$ are encoded with a Haar random isometry $\bV: \C^K \mapsto \C^N$ to produce code states $\ket{\bpsi} = \bV \cdot \ket{m} \in \C^N$.
Due to the immense amount of randomness needed to generate the encoding isometry, Haar random codes are essentially without structure, and as a result they can only be error corrected approximately rather than exactly.
But this lack of structure has benefits as well,
because
it should be difficult for an error to corrupt one codeword into another,
and so Haar random codes should be extremely resilient to noise.
We formalize this intuition by proving that Haar random codes error correct as well as one could possibly hope for.
In particular, we show that given a linear space of errors $\errors = \mathrm{span}\{E_1, ..., E_m\}$ generated by $m$ fixed Pauli matrices,
if a noise channel $\errorchan$ whose errors come from $\errors$ is applied to a code state $\ket{\bpsi}$,
then one can recover $\ket{\bpsi}$ up to negligible error in trace distance so long as $Km \ll N$.
Our results hold for all ``sufficiently large'' codespaces, which satisfy the minor requirement that the codespace dimension $K$ is at least $(\log_2 N)^3$.

To understand our bound, let us recall the well-known \emph{quantum Hamming bound} on the parameters of a QEC.
It pertains specifically to \emph{nondegenerate} QECs, which are exact quantum codes with the following property: if $\ket{v_1}, \ldots, \ket{v_K}$ is an orthonormal basis of the codespace $C \subseteq \C^N$, then the vectors $E_i \cdot \ket{v_j}$, for $i \in [m]$ and $j \in [K]$, are orthogonal to each other.
Since these vectors live in $\C^N$, this is of course only possible if $Km \leq N$,
and so the quantum Hamming bound, in its full generality, states that a $K$-dimensional nondegenerate code $C \subseteq \C^N$ correcting the Pauli errors $\{E_1, \ldots, E_m\}$ must satisfy $Km \leq N$.
This argument crucially relies on $C$ being nondegenerate, and it remains an important open problem to determine whether degenerate QECs also satisfy the quantum Hamming bound.

Our main result, then, stated more succinctly, is that Haar random codes approximately saturate the quantum Hamming bound.
This gives the first construction of an error correcting code which is known to do so for almost all parameters; in fact, it is \emph{impossible} for exact QECs to even approximately satisfy the quantum Hamming bound for most settings of parameters, as there are large ranges of parameters where the quantum Singleton bound is strictly stronger than the quantum Hamming bound.
Our proof shows that if $\ket{\bv_1}, ..., \ket{\bv_K}$ is a basis of the codespace $\bC$ of a Haar random code, then
the randomness in $\bC$ scatters the $Km$ vectors $E_i \ket{\bv_j}$ throughout the Hilbert space so effectively that they essentially act as an orthonormal basis;
furthermore, the only barrier to this holding is when the dimension $N$ is simply too small to fit $Km$ orthonormal vectors completely.
In a sense, this means that a Haar random code forms a type of \emph{approximate} nondegenerate code.
Thus, it is subject to the quantum Hamming bound,
and we show that it also attains it.

Our work is situated in a long tradition of research in the area of coding theory which studies properties of random ensembles of codes.
Although these ensembles of random codes cannot always be implemented efficiently,
there are many settings in which they are able to achieve optimal parameters, and so studying them helps us understand the ultimate limits of error correcting codes;
in addition, even when they do not achieve optimal parameters, they still serve as important benchmarks against which other codes may be compared.
In the literature on classical error correction, this typically means studying either random codes (i.e.\ uniformly random subsets of $[d]^n$, where $[d]$ is the alphabet of the code)
or random linear codes.
For example, dating back to 1957, Varshamov~\cite{Var57} showed that random linear codes attain the Gilbert-Varshamov (GV) bound,
which is one of the best known lower bounds on the parameters a classical code can achieve.
Going back even earlier, Shannon~\cite{Sha48} used random codes to prove the Shannon capacity theorem,
showing that they achieve channel capacity.

In quantum error correction, the two most commonly studied ensembles of random codes are random stabilizer codes and random CSS codes. They are often used to show the existence of ``good'' codes, i.e.\ codes with linear dimension and distance,
and they also provide examples of codes which attain quantum analogues  of the GV bound.\footnote{Confusingly, there are actually \emph{two} bounds in the literature which are commonly referred to as ``the quantum GV bound'', a weaker one attained by random CSS codes~\cite{Sho18} and a stronger one attained by random stabilizer codes~\cite{Smi06}.} 
Random stabilizer codes and random CSS codes can both be seen as quantum analogues of classical random \emph{linear} codes: whereas classical random linear codes can be specified by a list of random standard basis parity checks, these quantum codes can be specified by a list of random Pauli basis parity checks. However, far less seems to be known about Haar random codes, arguably the simplest and most natural family of random quantum codes. These codes can be seen as the quantum analogue of classical random codes: just as a classical random code encodes each  message $m$ as a uniformly random (distinct) classical string, a Haar random code encodes each message $\ket{m}$ as a uniformly random (orthogonal) quantum state. Despite their simplicity, even basic quantities such as the number of errors that Haar random codes can correct had not been pinned down prior to our work.

Haar random codes are also motivated by physics,
where they arise naturally in the study of highly disordered and chaotic systems such as black holes.
Given a $k$-qubit input state $\ket{m}$, a chaotic system can be modeled as appending some already-existing ancilla qubits $\ket{0^{n-k}}$ to $\ket{m}$ and then applying an $n$-qubit Haar random unitary $\bU$ to the entire system.
The resulting state $\ket{\bpsi} \coloneqq \bU \ket{m} \ket{0^{n-k}}$ is a code state in the Haar random code specified by the encoding isometry $\bV \coloneqq \bU\cdot (I \otimes \ket{0^{n-k}})$, and the question one wants to answer about the chaotic system often reduces to a question about the error correcting properties of this code.
A prominent example which follows this template is the Hayden–Preskill thought experiment~\cite{HP07},
which seeks to understand how well a state $\ket{m}$ which has fallen into a black hole can be recovered from the black hole's Hawking radiation.
Underlying their thought experiment in the case of a young black hole is a precise technical statement about the error correcting properties of Haar random codes, namely the fact that the original state $\ket{m}$ can be recovered to high fidelity, even if $\ell \ll (n-k)/2$ qubits have been erased from $\ket{\bpsi}$.
This fact actually follows as a very special case from our main result: an $\ell$ qubit erasure error can be modeled as a noise channel $\calN$ involving $m = 4^{\ell} \ll N/K$ Pauli matrices, and our main result implies that one can still recover $\ket{m}$ after $\errorchan$ has been applied, as $Km \ll N$.
However, our main result applies not just to the special case of erasure errors but to general Pauli errors, and it can be viewed as showing that chaotic systems have much stronger error correcting properties than previously known.

Our work leaves open the question of whether Haar random codes are optimal among all AQECs, or whether there exist codes which outperform them. 
Such a code would have to make fundamental use of degeneracy, as otherwise it would be subject to the same quantum Hamming bound that Haar random codes achieve.
As we have pointed out, it  remains open even whether \emph{exact} QECs must satisfy the quantum Hamming bound, or whether they can use degeneracy to surpass it.
We conjecture that Haar random codes are optimal, and that degeneracy cannot be used to outperform them.

\subsection{Approximate quantum error correction}

In order to state our results formally,
we will first define approximate quantum error correction.

\begin{definition}[Errors]
An \emph{error} on a Hilbert space $\calH$
is a linear operator $E$ which acts on $\calH$.
We will typically write $\errors$ for a set of errors.

Given a set of errors $\errors$, we write $\channels{\errors}$ for the set of quantum channels whose Kraus operators are contained in $\errors$. Formally, if $\errorchan$ is a quantum channel with Kraus decomposition
\begin{equation*}
    \errorchan(\rho) = \sum_{i} K_i \cdot \rho \cdot K_i^{\dagger},
\end{equation*}
then $\errorchan \in \channels{\errors}$ if $K_i \in \calE$ for all $i$.
\end{definition}

\begin{definition}[Approximate quantum error correcting code]\label{def:aqec}
    An \emph{approximate quantum error-correcting code (AQEC)} is specified by an isometry $V_{\Enc} : \C^K \rightarrow \C^N$ known as the \emph{encoding isometry}.
    Given a \emph{message state} $\ket{m} \in \C^K$,
    the corresponding \emph{encoded state} is $V_{\Enc} \cdot \ket{m}$.
    The subspace of all encoded states is given by the image of $V_{\Enc}$ and is known as the \emph{codespace}.
    We denote the encoding channel as $\Enc: \rho \mapsto V_\Enc \cdot \rho \cdot V_\Enc^\dagger$. 

    The code \emph{corrects for a set of errors $\calE$ with disturbance $\epsilon$} if there exists a \emph{decoding channel} $\Dec$ such that the following is true. For every noise channel $\errorchan \in \channels{\calE}$,
    \begin{equation*}
        \frac{1}{2} \norm{\Dec\circ \errorchan \circ \Enc - \calI}_{\diamond} \leq \epsilon, 
    \end{equation*}
    where $\calI$ denotes the identity channel. Note that the maximum probability an algorithm, given either $\Dec \circ \errorchan \circ \Enc$ or $\calI$ at random, can guess which channel it is given, is at most $1/2+ \varepsilon/2$, so $\varepsilon \leq 1$.
\end{definition}

We now define Haar random codes, the main topic of this work.

\begin{definition}[Haar measure]
    Write $U(N)$ for the group of $N \times N$ unitary matrices.
    The \emph{Haar measure} is the unique measure on $U(N)$ such that if $\bU$ is a random unitary distributed according to the Haar measure, then $\bU \cdot W$ and $W \cdot \bU$ are also distributed according to the Haar measure, where $W \in U(N)$ is any fixed unitary.
    For each $1 \leq i \leq N$, write $\ket{\bu_i}$ for the $i$-th column of $\bU$, so that we can write
    \begin{equation*}
    \bU = \sum_{i=1}^N \ketbra{\bu_i}{i}.
    \end{equation*}
\end{definition}
\begin{definition}[Haar random codes]
    A \emph{Haar random isometry} $\bV : \C^K \rightarrow \C^N$ is given by
    \begin{equation*}
        \bV = \sum_{i=1}^K \ketbra{\bu_i}{i},
    \end{equation*}
    where $\bU = \sum_{i=1}^N \ketbra{\bu_i}{i}$ is a Haar random unitary in $U(N)$. 
    This isometry maps $\C^K$ into the subspace of $\C^N$ given by $\bC = \Span\{\ket{\bu_1}, \ldots, \ket{\bu_K}\}$.
    A \emph{Haar random code of dimension $K$ in $\C^N$} is the code specified by a Haar random isometry $\bV$ of this form, and its associated codespace is given by $\bC$.
\end{definition}

\subsection{Our contributions}

\subsubsection{Unitary error sets}
We begin by introducing the class of errors that we study in this work,
which is significantly broader than the class of errors traditionally studied in quantum error correction.

\begin{definition}[Unitary error sets]
A set of errors $\{E_1, \ldots, E_m\}$ forms a \emph{unitary error set} if each $E_i$ is a unitary matrix, and for all $1 \leq i \neq j \leq m$, $\mathrm{tr}(E_i^\dagger E_j) = 0$.
Given a unitary error set $\{E_1, \ldots, E_m\}$, we will focus on correcting errors drawn from the set $\errors = \Span\{E_1, \ldots, E_m\}$. 
\end{definition}

As a special case of unitary error sets,
we can recover the Pauli error sets which are typically used to model noise in quantum error correction.

\begin{example}[Pauli matrices]
The single-qubit \emph{Pauli matrices} are the four matrices
\begin{equation*}
    I = \begin{pmatrix}
        1 & 0 \\
        0 & 1
        \end{pmatrix}, \quad
    X = \begin{pmatrix}
        0 & 1 \\
        1 & 0
        \end{pmatrix}, \quad
    Y = \begin{pmatrix}
        0 & -i \\
        i & 0
        \end{pmatrix}, \quad
    Z = \begin{pmatrix}
        1 & 0 \\
        0 & -1
        \end{pmatrix}.
\end{equation*}
The Pauli matrices are unitary and satisfy $\tr(P^{\dagger} Q) = 0$ for $P \neq Q \in \{I, X, Y, Z\}$ and therefore form a unitary error set of size 4.
They can be generalized to act on $n$ qubits
by considering the $n$-qubit Pauli matrices $\{I, X, Y, Z\}^{\otimes n}$.
These form a basis on the set of $n$-qubit matrices, and any subset of them $\{P_1, \ldots, P_m\}$ forms a unitary error set.
Two such subsets will be especially important for us.
\begin{itemize}
    \item[$\circ$] Let $\{E_1, \ldots, E_m\}$ contain all Paulis which act on the first $t$ qubits, so that $m = 4^t$, and set $\calE = \Span\{E_1, \ldots, E_m\}$.
    Then $\calE$ contains all matrices of the form $M \otimes I_{n-t}$, where $M$ is any $t$-qubit matrix.
    This is the set of errors which is relevant if we care about \emph{erasure errors}.
    \item[$\circ$] Given $P \in \{I, X, Y, Z\}^{\otimes n}$,  define the \emph{weight} of $P$ $\mathrm{wt}(P)$ to be the number of non-identity Paulis in $P$'s tensor product.
    Let $\{E_i\}$ contain all the Pauli matrices which have weight at most $t$,
    and set $\calE = \Span\{E_i\}$.
    Then $\calE$ is the set of errors which is relevant if we care about \emph{$t$-qubit errors}.
\end{itemize}
There is a natural generalization of the Pauli matrices to qudits of dimension $d$.
To begin, let us first define the \emph{shift} and \emph{clock} operators as follows.
\begin{equation*}
    X_d \cdot \ket{j} = \ket{j + 1},
    \quad \text{and}
    \quad Z_d \cdot \ket{j} = e^{2 \pi i j/d} \cdot \ket{j}, \quad \text{where $j \in \{0, \ldots, d-1\}$}.
\end{equation*}
Here, the addition $(j+1)$ is performed modulo $d$.
Then the qudit Pauli matrices are given by the set $\{X^a Y^b \mid a, b \in \{0, 1, \ldots, d-1\}\}$.
These form a unitary error set,
and they can be generalized to $n$-qudit matrices by considering tensor products.
\end{example}

However, unitary error sets $\{E_1, \ldots, E_m\}$ can in general model much broader classes of noise than Pauli matrices are able to.
For example, the $E_i$'s need not be low weight or even expressible as a tensor product of single-qudit operators (indeed, we will in general not assume that our Hilbert spaces have any tensor product structure whatsoever).
We believe that identifying unitary error sets as the precise error model that Haar random codes natively error correct is a key conceptual contribution of this work.

\subsubsection{Approximately nondegenerate codes}

Next, we formally define approximately nondegenerate codes.
As far as we know, this notion has not appeared in the literature prior to our work.
To begin, let us recall that an exact quantum code is nondegenerate with respect to a set of errors $\{E_1, \ldots, E_m\}$ if the vectors $E_i \cdot \ket{v_j}$ are orthogonal to each other,
where $\ket{v_1}, \ldots, \ket{v_K}$ is a basis of the codespace.
To define what it means for a code to be approximately nondegenerate, then, let us first define what it means for a set of vectors to be approximately orthogonal.

A set of vectors $\ket{u_1}, \ldots, \ket{u_\ell} \in \C^N$ is \emph{exactly} orthogonal if it satisfies the following condition: for any coefficients $c_1, \ldots, c_{\ell} \in \C$,
the vector
\begin{equation*}
    c_1 \cdot \ket{u_1} + \cdots + c_{\ell}\cdot \ket{u_{\ell}}
    = \Big(\sum_{i=1}^{\ell} \ketbra{u_{i}}{i}\Big) \cdot \Big(\sum_{i=1}^{\ell} c_i \ket{i}\Big)
    \eqqcolon T \cdot \ket{c}
\end{equation*}
has the same length as $\ket{c}$ itself.
This means that the linear transformation $T : \C^{\ell} \rightarrow \C^{N}$ is a length-preserving isometry, and so all of $T$'s singular values are equal to 1.
We now relax these definitions to the approximate case as follows.

\begin{definition}[Approximate isometry]\label{def:approx_isometry}
    Given $\delta \geq 0$, a matrix $T$ of size $N\times \ell$ with $N\geq \ell$ is called a \emph{$\delta$-approximate isometry} if 
    \begin{equation*}
        1-\delta \leq s_{\min}(T) \leq s_{\max}(T)\leq 1+\delta. 
    \end{equation*}
\end{definition}

\begin{definition}[Approximately orthonormal basis]\label{def:approx_ortho_basis}
    Given $\delta \geq 0$, a set of vectors $\ket{u_1}, \ldots, \ket{u_\ell} \in \C^N$ with $N\geq \ell$ forms a \emph{$\delta$-approximately orthonormal basis} if
    \begin{equation*}
        1-\delta \leq \norm{\sum_{i=1}^\ell c_i \ket{u_i}} \leq 1 + \delta
    \end{equation*}
    for any unit vector $\ket{c} = \sum_{i=1}^{\ell} c_i \ket{i} \in \C^{\ell}$.
    This is equivalent to the following statement: if $T = \sum_{i=1}^{\ell} \ketbra{u_i}{i}$, then $T$ is a $\delta$-approximate isometry.
\end{definition}

Note that if $\delta < 1$, then the vectors $\ket{u_1}, \ldots, \ket{u_{\ell}}$ are linearly independent.
Otherwise, there would exist a unit vector $\ket{c} = \sum_{i=1}^{\ell} c_i \ket{i} \in \C^{\ell}$ such that $\sum_{i=1}^\ell c_i \ket{v_i} = 0$, which contradicts the assumption that
\begin{equation*}
    \norm{\sum_{i=1}^\ell c_i \ket{v_i}} \geq 1 - \delta >0. 
\end{equation*}
Our definitions impose a much stronger orthogonality condition than merely bounding the pairwise inner products of vectors. In fact, it is easy to construct a set of vectors with small pairwise inner products which do not form an approximately orthonormal basis.\footnote{For example, for each $i\in [\ell]$, let $\ket{v_i} = \sqrt{1-\delta} \ket{i} + \sqrt{\delta}\ket{0}$.  
Then their pairwise inner products are small: $\braket{v_i}{v_j} = \delta$ if $i\neq j$. However, they will be far from an approximately orthonormal basis when $\ell$ is large, since $\norm{(\ket{v_1} + \cdots + \ket{v_{\ell}})/\sqrt{\ell}} = \sqrt{1 + \delta (\ell - 1)}$.}

With this definition in hand, we can now define what it means for a code to be approximately nonegenerate.

\begin{definition}[Approximate nondegenerate code]
\label{def:approx_nondeg_code}
    Let $V:\C^K\mapsto \C^N$ be an encoding isometry and $\{E_1, \ldots, E_m\}$ be a unitary error set. 
    Given $\delta \geq 0$, $V$ is a \emph{$\delta$-approximate nondegenerate code with respect to $\{E_1, \ldots, E_m\}$} if the vectors $\{E_i \cdot \ket{v_j}\}_{i \in [m], j \in [K]}$ form a $\delta$-approximately orthonormal basis.
    This is equivalent to the statement that the matrix
    \begin{equation*}
        \sum_{i=1}^m \sum_{j= 1}^{K} E_i \ketbra{v_j}{j, i}
        = \sum_{i=1}^m E_i V \otimes \bra{i}
    \end{equation*}
    is a $\delta$-approximate isometry. 
\end{definition}

If $\delta < 1$, then the vectors $E_i \cdot \ket{v_j}$ of a $\delta$-approximate nondegenerate code must be linearly independent.
This implies that $mK \leq N$,
as the overall dimension is $N$.
Thus, the quantum Hamming bound, which holds for exact nondegenerate codes, also applies to $\delta$-approximate nondegenerate codes with $\delta < 1$.

\subsubsection{Main results}

Our first main result states that approximately nondegenerate codes are indeed AQECs.

\begin{restatable}[Approximate nondegenerate codes are AQECs]{theorem}{maindecode}
\label{thm:decode}
    Let $V:\C^K\mapsto \C^N$ be an encoding isometry and $\{E_1, \ldots, E_m\}$ be a unitary error set. 
    Given $0 \leq \delta < 1$, if $V$ is a $\delta$-approximate nondegenerate code with respect to $\{E_1, \ldots, E_m\}$, then $V$ specifies an AQEC which corrects for the set of errors $\calE = \Span\{E_1, \ldots, E_m\}$ with disturbance $\delta$. 
\end{restatable}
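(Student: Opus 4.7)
The plan is to construct an explicit decoding channel from the polar decomposition of the block matrix $W := \sum_{i=1}^m E_i V \otimes \bra{i}$, which by Definition~\ref{def:approx_nondeg_code} is a $\delta$-approximate isometry. Write the polar decomposition $W = \tilde{W}\,P$, where $\tilde{W}: \C^K \otimes \C^m \to \C^N$ is a genuine isometry (its existence uses $Km \leq N$, which follows from $\delta < 1$) and $P$ is a positive operator on $\C^K \otimes \C^m$ with $\mathrm{spec}(P) \subseteq [1-\delta, 1+\delta]$. I would define $\Dec$ to apply the coisometry $\tilde{W}^\dagger$ and trace out the $\C^m$ syndrome register, attaching a fixed garbage state to the off-image component so that $\Dec$ is a valid CPTP map.

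The critical algebraic identity is
\begin{equation*}
    \tilde{W}^\dagger E_j V \;=\; \tilde{W}^\dagger W\,(I_K \otimes \ket{j}) \;=\; P\,(I_K \otimes \ket{j}),
\end{equation*}
which follows from $E_j V = W(I_K \otimes \ket{j})$ together with $\tilde{W}^\dagger \tilde{W} = I$. Given any $\errorchan \in \channels{\errors}$ with each Kraus operator expanded as $K_k = \sum_j \alpha_{kj} E_j$, this yields $\tilde{W}^\dagger K_k V = P(I_K \otimes \ket{\alpha_k})$, where $\ket{\alpha_k} := \sum_j \alpha_{kj}\ket{j}$. The Kraus completeness relation $\sum_k K_k^\dagger K_k = I_N$ together with the trace-orthogonality $\mathrm{tr}(E_j^\dagger E_{j'}) = N\delta_{jj'}$ implies that $\tau := \sum_k \ket{\alpha_k}\bra{\alpha_k}$ is a density matrix on $\C^m$. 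Since each $K_k V$ lies in $\mathrm{Im}(W) = \mathrm{Im}(\tilde{W})$, the garbage term never fires, and assembling the pieces gives the clean formula
\begin{equation*}
    \Dec \circ \errorchan \circ \Enc(\rho) \;=\; \mathrm{tr}_E\bigl[P(\rho \otimes \tau)P\bigr].
\end{equation*}

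Because $\mathrm{tr}_E[\rho \otimes \tau] = \rho$ and we may write $P = I + Q$ with $\|Q\|_\infty \leq \delta$, the difference expands as $Q(\rho\otimes\tau) + (\rho\otimes\tau)Q + Q(\rho\otimes\tau)Q$; applying $\|AB\|_1 \leq \|A\|_\infty \|B\|_1$ term-by-term and using that partial traces are contractions in trace norm produces an $O(\delta)$ bound on $\|\Dec \circ \errorchan \circ \Enc(\rho) - \rho\|_1$. Tensoring a reference system in at every step yields the corresponding diamond-norm bound.

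I expect the main obstacle will be achieving the sharp constant $\delta$ in the disturbance, since the direct expansion above only gives $\delta + O(\delta^2)$. Obtaining the precise constant likely requires a more refined argument: for instance, computing the fidelity $\bra{\phi}\Phi(\ket{\phi}\bra{\phi})\ket{\phi}$ on pure purified inputs via Uhlmann's theorem, where first-order terms in $Q$ can be cancelled exactly before converting back to trace distance, or alternatively choosing a slightly different normalization of the decoder that trades off the trace-preservation condition against the accuracy of inversion.
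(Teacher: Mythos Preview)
Your approach is essentially identical to the paper's: your $\tilde{W}$ is precisely the paper's $D^\dagger$ (polar decomposition of $W$ coincides with SVD-rounding of $\widehat{D}=W^\dagger$), and your key identity $\tilde{W}^\dagger E_j V = P(I_K\otimes\ket{j})$ is the paper's relation $D\widehat{D}^\dagger$ restricted to basis vectors. The sharp constant $\delta$ you worry about is obtained exactly by your own proposed fix of working with purified inputs: the paper passes to the Stinespring dilation $E_{\calN}$ acting on a pure bipartite $\ket{\phi}_{\sA\sB}$, gets the vector bound $\lVert D\,E_{\calN}\,V\ket{\phi}-\ket{\phi}\ket{c}\rVert\leq\lVert P-I\rVert\leq\delta$ directly, and then converts via $\lVert\ketbra{u}-\ketbra{v}\rVert_1\leq 2\lVert\ket{u}-\ket{v}\rVert$ before tracing out, giving disturbance $\leq\delta$ with no $O(\delta^2)$ loss.
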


Our second main result states that Haar random codes are nondegenerate codes, and therefore AQECs, with high probability,
so long as their parameters approximately satisfy the quantum Hamming bound.

\begin{restatable}[Haar random codes satisfy the quantum Hamming bound]{theorem}{mainhaar}
    \label{thm:haar}
    There is a universal constant $C > 0$ such that the following is true. 
    Let $N,m,K$ be positive integers satisfying
    \begin{equation*}
        \delta \coloneqq 3 \cdot \left( \sqrt{\frac{Km}{N}} + C \cdot \sqrt{\frac{m\cdot (\log N)^3}{N}}\right) < 1. 
    \end{equation*}
    With probability at least $1-2/N^{(\log N)^2}$, 
    a Haar random isometry $\bV:\C^K\mapsto \C^N$ gives a $\delta$-approximate nondegenerate code with respect to a unitary error set $\{E_1, \ldots, E_m\}$.
    As a result, $\bV$ gives an AQEC which corrects for the set of errors $\calE = \Span\{E_1, \ldots, E_m\}$ with disturbance $\delta$.
\end{restatable}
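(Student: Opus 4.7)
The plan is to invoke Theorem~\ref{thm:decode}, reducing the claim to showing that with probability $\geq 1 - 2/N^{(\log N)^2}$ the Haar random isometry $\bV$ is a $\delta$-approximate nondegenerate code with respect to $\{E_1, \ldots, E_m\}$. By Definition~\ref{def:approx_nondeg_code} this amounts to showing that the tall $N \times Km$ matrix $T := \sum_{i=1}^m E_i \bV \otimes \bra{i}$ has all of its singular values in $[1-\delta, 1+\delta]$, or equivalently that $\|T^\dagger T - I_{Km}\| \leq 2\delta + \delta^2$. A short Haar calculation using the identity $\mathbb{E}[\bV^\dagger A \bV] = (\mathrm{tr}\, A / N)\, I_K$ together with the orthogonality $\mathrm{tr}(E_i^\dagger E_j) = N\delta_{ij}$ gives $\mathbb{E}[T^\dagger T] = I_{Km}$, so the whole task reduces to bounding $\|T^\dagger T - \mathbb{E}[T^\dagger T]\|$ in operator norm.

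To bring the problem into a form handled by the Bandeira--Boedihardjo--van Handel inequality~\cite{BBV23}, which is tailored to Gaussian ensembles, I would pass from the Haar measure to a Gaussian proxy. Writing $\bV = \bG (\bG^\dagger \bG)^{-1/2}$ for $\bG$ an $N \times K$ complex Gaussian matrix with iid entries of variance $1/N$, standard Wishart concentration gives $\|\bG^\dagger \bG - I_K\| = O(\sqrt{K/N})$ with very high probability. Since $\sqrt{K/N} \leq \sqrt{Km/N} \leq \delta/3$, this discrepancy is already dominated by the target $\delta$, and the task reduces to bounding $\|\widetilde{T}^\dagger \widetilde{T} - I_{Km}\|$ for the Gaussian proxy $\widetilde{T} := \sum_i E_i \bG \otimes \bra{i}$.

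Next I would apply BBV to a Hermitian object linear in the entries of $\bG$, namely the Hermitian dilation $Y$ of $\widetilde{T}$ whose off-diagonal blocks are $\widetilde{T}$ and $\widetilde{T}^\dagger$. A direct calculation gives $\mathbb{E}[Y^2] = (Km/N)\, I_N \oplus I_{Km}$, so the BBV spectral parameter is $\sigma(Y) = 1$ (using $Km \leq N$), and a similar calculation bounds the ``free''/covariance parameter of $Y$ by $v(Y) = O(m/N)$ via the tensor-product structure of $\widetilde{T}$. Plugging these into BBV then yields $\|Y\| \leq 1 + O(\sqrt{Km/N}) + O(\sqrt{m(\log N)^3/N})$ with the stated probability, matching the target $\delta$ and giving the upper bound $s_{\max}(\widetilde{T}) \leq 1 + \delta$.

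The hard part is the matching lower bound $s_{\min}(\widetilde{T}) \geq 1 - \delta$: bounding $\|Y\|$ alone controls only the largest singular value of $\widetilde{T}$, not the smallest, because $Y$ has an $(N-Km)$-dimensional kernel coming from the rank deficiency of $\widetilde{T}\widetilde{T}^\dagger$, so the nonzero singular values of $\widetilde{T}$ could in principle collapse anywhere in $[0,\|Y\|]$. Moreover $\widetilde{T}^\dagger \widetilde{T} - I_{Km}$ is quadratic in $\bG$, so BBV does not apply to it directly. To surmount this I expect to either (i) appeal to a resolvent-level consequence of BBV23 to control $(Y - zI)^{-1}$ at points $z$ near the origin, ruling out spectrum of $Y$ near $0$ and thereby pinning the nonzero singular values of $\widetilde{T}$ close to $1$, or (ii) linearize the quadratic form $\widetilde{T}^\dagger \widetilde{T} - I_{Km}$ by introducing an independent auxiliary Gaussian copy so that the resulting bilinear object is amenable to BBV. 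Either route yields the two-sided singular value bound, and translating back through the Gaussian-to-Haar reduction completes the proof.
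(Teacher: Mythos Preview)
Your overall strategy---pass to a Gaussian proxy $\bG$, apply BBV23 to $\widetilde{T} = \sum_i E_i \bG \otimes \bra{i}$, then transfer back to Haar via $\bV = \mathrm{isometrize}(\bG) = \bG(\bG^\dagger \bG)^{-1/2}$---is exactly the paper's proof, and your computations of the BBV parameters $\sigma = 1$ and $v^2 = m/N$ agree with Lemma~\ref{lem:gaussian_concentration}.

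The one place you go astray is the ``hard part.'' The lower bound $s_{\min}(\widetilde{T}) \geq 1 - \delta$ requires neither of your proposed workarounds: BBV23 already contains a smallest-singular-value theorem for rectangular Gaussian matrices (their Theorem~3.14 combined with their Lemma~3.15), which the paper cites directly and packages as its Theorem~\ref{thm:BBvH23_matrix_concentration}. That result is applied to the rectangular matrix $\widetilde{T}$ itself, with no Hermitian dilation, and yields the two-sided bound
\[
\sqrt{s_{\min}(M_1)} - \delta \;\leq\; s_{\min}(\widetilde{T}) \;\leq\; s_{\max}(\widetilde{T}) \;\leq\; \sqrt{s_{\max}(M_1)} + \delta,
\qquad M_1 = \E[\widetilde{T}^\dagger \widetilde{T}].
\]
Since the orthogonality $\tr(E_i^\dagger E_j) = N\delta_{ij}$ forces $M_1 = I_{Km}$ exactly, both inequalities give $1 \pm \delta$ immediately. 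Your route~(i) is morally this theorem, but you should be aware it is already a black-box statement in BBV23 rather than something you must extract by fresh resolvent analysis; route~(ii) is unnecessary and it is not clear a decoupling of that kind would produce a two-sided bound anyway.
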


To understand our bound,
suppose that $K \geq (\log N)^3$,
meaning that the codespace is assumed to have a small lower bound on its dimension.
Then the Haar random code can be corrected with disturbance $O(\sqrt{Km/N})$, 
which is small so long as $Km \ll N$.
This is the sense in which Haar random codes attain the quantum Hamming bound, approximately.
\Cref{thm:haar} entails showing that the set of vectors $\{E_i \cdot \ket{\bv_j}\}_{i \in [m], j \in [K]}$, where $\ket{\bv_j} = \bV \cdot \ket{j}$, are approximately orthonormal so long as $mK \ll N$.
This means that a small ``seed'' of Haar random vectors $\{\ket{\bv_j}\}_{j \in [K]}$ can be expanded into a large set of approximately orthonormal vectors simply by shifting them by a fixed set of unitary matrices $\{E_i\}_{i \in [m]}$, a statement that we found surprising.

We note that even if $K$ is not at least $(\log N)^3$, our bound still shows that Haar random codes are good AQECs when $mK \ll N$ and $m \ll N / (\log N)^3$, which is a relatively mild condition on the number of errors $m$.
It is not clear to us if this second condition on $m$ is necessary, or if it is merely a limitation of our techniques.

\subsubsection{Consequences of our results}

By far the most widely studied setting in quantum error correction is when the code consists of multiple qudits and errors are local, affecting only a bounded number of qudits at a time.
Here, we will illustrate the power of our results by instantiating them in this setting and comparing them to known bounds on QECs.
To begin, let us recall the definition of an exact $[[n, k, d]]_q$ QEC: 
this is a code $C$ which encodes $k$ qudits of dimension $q$ into $n$ such qudits,
meaning the encoding isometry is of the form $V : (\C^q)^{\otimes k} \rightarrow (\C^q)^{\otimes n}$, and which has distance $d$.
In terms of correcting errors, having distance $d$ can be interpreted in one of two ways.
\begin{enumerate}
    \item[$\circ$] First, if $d = 2t+1$, then it means that the code can be corrected from any $t$-local error, which is any error in $\Span\{P \mid \text{$P$ is a weight-$t$ Pauli}\}$.
    \item[$\circ$] Second, it means that the code can be corrected from any $(d-1)$-qudit erasure error. If $\ket{\psi} \in C$ is a code state, such an error involves choosing a subset $S \subseteq[n]$ of the qudits of size at most $(d-1)$ and discarding them, so that the decoding algorithm is provided $\tr_S(\ketbra{\psi})$ as well as the subset $S$; it must then use these to recover $\ket{\psi}$. It turns out to be equivalent that the decoding algorithm can recover from any error in $\Span\{P \mid \mathrm{supp}(P) \subseteq S\}$, so long as $S$ is provided.
\end{enumerate}
Curiously, in the setting of \emph{approximate} quantum error correction, it is unclear whether there is still an operationally meaningful notation of distance. As Cr\'{e}peau, Gottesman, and Smith put it,  ``This suggests there is no sensible notion of distance for an approximate quantum error-correcting code.''~\cite{CGS05}.\footnote{Bergamaschi, Golowich, and Gunn have proposed defining the distance of an AQEC to be one plus the number of errors it can correct approximately~\cite{BGG24}. Although this does indeed generalize one way to define the distance for an exact QEC, it is unclear to us whether this can be naturally interpreted in terms of a distance between codewords.}

Now we recall two well-known bounds on the parameters of an exact QEC.
First, the quantum Singleton bound states that any $[[n, k, d = 2t+1]]_q$ exact QEC must satisfy $d-1 \leq (n-k)/2$ or, equivalently, $t \leq (n-k)/4$.
This means that an exact QEC can only correct erasure errors on at most $(n-k)/2$ of the qudits,
and it can only correct $t$-local errors for $t \leq (n-k)/4$.
Writing $R = k/n$ for the \emph{rate} of the code,
we can restate these bounds as follows:
an exact QEC can handle erasure errors on $n/2 \cdot (1-R)$ qudits and $t$-local errors for $t \leq n/4 \cdot (1-R)$.

Next, as we have seen, the quantum Hamming bound states that a nondegenerate QEC of dimension $K$ correcting $m$ errors must satisfy $mK \leq N$, where $N$ is the dimension of the overall space.
An $[[n, k, 2t+1]]_q$ exact nondegenerate QEC has $N = q^n$ and $K = q^k$. Furthermore, since it corrects all weight-$t$ Paulis, we have
\begin{equation*}
    m = \sum_{i=0}^t \binom{n}{i} \cdot (q^2 - 1)^i.
\end{equation*}
Hence, the quantum Hamming bound states that $m \leq q^{n-k}$.
Taking the logarithm of both sides, we have $\log_q(m) \leq n - k = n \cdot (1-R)$.

A direct consequence of our \Cref{thm:haar} is that Haar random codes will, with exponentially small disturbance and exponentially small chance of failure, essentially match the quantum Singleton bound for erasure errors and the quantum Hamming bound for general weight-$t$ errors.
In addition, our \Cref{thm:haar}  also implies that Haar random codes are essentially able to correct \emph{twice} as many general qudit errors as exact QECs are able to, due to the quantum Singleton bound.
We show these consequences below.

\begin{corollary}[Consequences of our main result]\label{cor:main_bounds}
    Let $q$ be the qudit local dimension,
    and let $k \leq n$ be positive integers.
    Set $N = q^n$, $K = q^k$, and $R = k/n$. Suppose that $K\geq (\log_2 N)^3$, i.e.\ $(q^{R/3})^n \geq n\cdot \log_2(q)$.
    Then the following statements are true.

    \begin{enumerate}
        \item (Quantum Singleton bound for erasure errors) \label{item:erasure}
        With probability at least $1-2^n\cdot 2/N^{(\log N)^2}$, a Haar random code which encodes $k$ qudits into $n$ qudits can correct for erasure errors on at most $t$ qudits with disturbance $O(q^{-\gamma\cdot n/2})$ if
        \begin{equation*}
            t \leq \frac{n}{2}\cdot \left(1-R - \gamma\right), \qquad \text{for any }\gamma > 0. 
        \end{equation*}
        
        \item (Twice the quantum Singleton bound for general qudit errors)\label{item:double-single}
        With probability at least $1-2/N^{(\log N)^2}$, a Haar random code which encodes $k$ qudits into $n$ qudits can correct $t$-local errors with disturbance $O(2^{-n/2})$ if
        \begin{equation*}
            t \leq \frac{n}{2}\cdot \left(1 - R - \frac{2}{\log_2(q)} \right).  
        \end{equation*}

        \item (Quantum Hamming bound for general qudit errors) 
        With probability at least $1-2/N^{(\log N)^2}$, a Haar random code which encodes $k$ qudits into $n$ qudits can correct $t$-local errors with disturbance $O(q^{-\gamma \cdot n/2})$ if
        \begin{equation*}
            \log_q(m) \leq n\cdot (1-R-\gamma), \qquad \text{for any }\gamma>0,
        \end{equation*}
        where
        \begin{equation*}
            m = \sum_{i=0}^t \binom{n}{i}\cdot (q^2 - 1)^i. 
        \end{equation*}
    \end{enumerate}
\end{corollary}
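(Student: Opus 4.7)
The plan is to apply \Cref{thm:haar} and \Cref{thm:decode} in sequence for each of the three cases. In each case, I first identify a unitary error set whose span contains the Kraus operators of the relevant noise, then bound the quantity $\delta = 3\bigl(\sqrt{Km/N} + C\sqrt{m (\log N)^3 / N}\bigr)$ appearing in \Cref{thm:haar} to show it is as small as advertised, and finally invoke \Cref{thm:decode} to convert approximate nondegeneracy into an AQEC with matching disturbance.

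For Part 1, I fix a subset $S \subseteq [n]$ of size at most $t$. An erasure channel on $S$ has Kraus operators of the form $A_S \otimes I_{[n]\setminus S}$ with $A$ acting only on $\calH_S$; expanding $A$ in the Pauli basis shows each Kraus operator lies in the span of the $q^{2|S|}$ Paulis supported on $S$. These Paulis form a unitary error set with $m \leq q^{2t}$, so \Cref{thm:haar} gives that the Haar random code is $\delta$-approximately nondegenerate for this set with probability $1 - 2/N^{(\log N)^2}$. Union-bounding over the at most $2^n$ choices of $S$, and applying \Cref{thm:decode} separately for each $S$, yields the factor $2^n$ in the failure probability. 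The key inequality: when $t \leq (n/2)(1-R-\gamma)$, we have $Km/N \leq q^{k + 2t - n} \leq q^{-n\gamma}$, and since $K \geq (\log_2 N)^3$ we also have $m(\log N)^3 / N \leq ((\log N)^3/K)\cdot q^{-n\gamma} = O(q^{-n\gamma})$, giving $\delta = O(q^{-n\gamma/2})$.

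For Parts 2 and 3, the relevant noise lies in the span of all $n$-qudit Paulis of weight at most $t$, a unitary error set of size $m = \sum_{i=0}^{t} \binom{n}{i}(q^2-1)^i$, so a single application of \Cref{thm:haar} suffices. For Part 2, I bound $m \leq 2^n q^{2t}$; then using $t \leq (n/2)(1-R-2/\log_2 q)$ together with the identity $q^{-2n/\log_2 q} = 2^{-2n}$, I obtain $Km/N \leq 2^n\cdot q^{k+2t-n} \leq 2^n \cdot 2^{-2n} = 2^{-n}$, and similarly $m(\log N)^3 / N = O(2^{-n})$ after absorbing $(\log N)^3/K$ via $K \geq (\log_2 N)^3$. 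Hence $\delta = O(2^{-n/2})$. For Part 3, the hypothesis $\log_q m \leq n(1-R-\gamma)$ gives $m/N \leq q^{-k-n\gamma}$ directly, from which both terms in $\delta$ become $O(q^{-n\gamma/2})$ by the same reasoning.

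The main obstacle is really just careful arithmetic: translating the hypothesis on $t$ (or on $\log_q m$) into sufficiently small bounds on $Km/N$ and $m(\log N)^3/N$, and confirming that $K \geq (\log_2 N)^3$ is strong enough to absorb the stray $(\log N)^3/K$ factor in each regime. The only nontrivial conceptual step is the standard reduction from erasure channels to Pauli noise supported on the erased subset, which is what forces the union bound and the $2^n$ factor in Part 1; everything else is bookkeeping.
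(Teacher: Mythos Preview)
Your proposal is correct and follows essentially the same approach as the paper: apply \Cref{thm:haar} with the appropriate Pauli error set in each case, use the assumption $K \geq (\log_2 N)^3$ to absorb the second term of $\delta$ into $O(\sqrt{Km/N})$, and for erasures union-bound over the at most $2^n$ possible subsets $S$. The only cosmetic difference is that the paper collapses the two terms of $\delta$ into $O(\sqrt{Km/N})$ at the outset, whereas you track them separately; your explicit invocation of \Cref{thm:decode} is also redundant since \Cref{thm:haar} already states the AQEC conclusion.
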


\begin{proof}
    Since we are assuming that $K\geq (\log_2 N)^3$, if we apply \Cref{thm:haar} with a unitary error set of size~$m$, the disturbance will be bounded by $O(\sqrt{Km/N})$. 
    Now we consider the three cases.
    \begin{enumerate}
        \item In an erasure error,
        one is provided the subset $S$ of size $|S| \leq t$ on which the erasure has occurred.
        Writing
        \begin{equation*}
            \calP_S = \{\text{qudit Paulis $P$} \mid \mathrm{supp}(P) \subseteq S\} \quad\text{and} \quad \calE_S = \Span\{\calP_S\},
        \end{equation*} 
        we can model an erasure error as a channel in $\channels{\calE_S}$, and therefore it suffices to correct for the set of errors $\channels{\calE_S}$.
        There are $q^2$ single-qudit Paulis $X^a Z^b$, and so $|\calP_S| = q^{2t}$.
        As a result, \Cref{thm:haar} states that probability at least $1 - 2/N^{(\log N)^2}$, a Haar random code will correct for the errors in $\calE_S$ with disturbance at most
        \begin{equation*}
            O(q^{\frac12 \cdot (k-n+2t)}) = O(2^{-\gamma n/2}).
        \end{equation*}
        Since $S$ may be chosen arbitrarily, we now union bound over all subsets $S$, of which there are trivially at most $2^n$.
        Hence, with probability $1 - 2^n \cdot 2/N^{(\log N)^2}$, we can correct for any $t$-qudit erasure error with the above disturbance.
        \item Consider the set of qudit Paulis with weight at most $t$. We can bound its size $m$ by
        \begin{equation*}
            m \leq \binom{n}{t}\cdot q^{2t} \leq 2^n \cdot q^{2t} = q^{2t + n/\log_2(q)}.
        \end{equation*}
        Applying \Cref{thm:haar}, with probability at least $1 - 2/N^{(\log N)^2}$, a Haar random code will correct for all $t$-local errors with disturbance
        \begin{equation*}
        O\left(q^{\frac12\left(k - n + 2t + \frac{n}{\log_2(q)}\right)}\right) = O\left(q^{\frac{1}{2} \cdot \frac{n}{\log_2(q)} (-2 + 1) }\right) = O(2^{-n/2})
        \end{equation*}
        
        \item By assumption, $m$ is the number of qudit Paulis with weight at most $t$. Applying \Cref{thm:haar}, with probability at least $1 - 2/N^{(\log N)^2}$, a Haar random code will correct for all $t$-local errors with disturbance
        \begin{equation*}
            O(q^{\frac12 \cdot (k-n+\log_q(m)}) = O(q^{-\gamma n /2}).
        \end{equation*}
    \end{enumerate}
    This completes the proof.
\end{proof}

\subsection{Related work}

\paragraph{Approximate quantum error correction.}

Bergamaschi, Golowich, and Gunn~\cite{BGG24} gave two constructions of AQECs which essentially saturate twice the quantum Singleton bound.
Their constructions give AQECs which encode $k$ qudits into $n$ qudits and can correct $t$-local errors if
\begin{equation*}
    t \leq \frac{n}{2} \cdot (1 - R - \gamma),
\end{equation*}
for a parameter $\gamma > 0$.
Their first construction uses qudits of local dimension $q = 2^{O(1/\gamma^5)}$ and has disturbance $2^{-\Omega(n)}$,
and their second construction uses qudits of slightly smaller local dimension $q = 2^{O(1/(\gamma^4\log(1/\gamma))}$ at the expense of a slightly worse disturbance of $2^{-O(\gamma n)}$.
Our \Cref{cor:main_bounds} shows that these guarantees are not actually specific to their constructions but are actually properties of \emph{most} AQECs.
Indeed, our results improve on their bounds slightly: \Cref{item:double-single} of \Cref{cor:main_bounds} shows that a Haar random code will, with high probability, correct the same number of errors as their codes do, with disturbance $2^{-\Omega(n)}$ and an improved local dimension of $2^{O(1/\gamma)}$.
However, one advantage of their codes is that their encoding and decoding operations are computationally efficient.

A follow-up work by Bergamaschi~\cite{Ber24} gave a construction of an AQEC which encodes $k$ qudits into $n$ qudits and can correct erasure errors on up to
\begin{equation*}
            t \leq \frac{n}{2}\cdot \left(1-R - \gamma\right)
        \end{equation*}
qudits with disturbance $2^{-\Omega(\gamma \cdot n)}$, for any $\gamma \geq 1/\log(n)$.
As above, our results show that these guarantees are actually properties of generic, Haar random codes, and that Haar random codes actually achieve slightly improved guarantees: by \Cref{item:erasure} of \Cref{cor:main_bounds}, a Haar random qudit code can correct the same number of erasures with an improved disturbance of $q^{-\Omega(\gamma \cdot n)}$ and $\gamma$ allowed to be any number $> 0$. However, as above, the Bergamaschi code has the advantage of having efficient encoding and decoding operations.

Finally, we note the work of Mamindlapally and Winter~\cite{MW23}, which has shown an analogue of the quantum Singleton bound for AQECs.
In particular, their Theorem 5, as stated in \cite[Theorem 1.6]{BGG24}, implies that any AQEC of rate $R = k/n$ correcting $t$-local errors with disturbance $\delta$ must satisfy
\begin{equation*}
    t \leq \frac{n}{2} \cdot\Big(1 - R + O\Big(\delta + \frac{1}{n} \cdot \delta \log_q(1/\delta)\Big)\Big).
\end{equation*}
We note that in our setting, when the disturbance $\delta$ is exponentially small in $n$, this bound actually implies the seemingly stronger bound of $t \leq n/2 \cdot (1- R)$.
This is because $t$ must always be an integer, and so the exponentially-small factor of $O(\delta \cdot n) + O(\delta \log_q(1/\delta))$ may be dropped, at least when $n$ is large enough to make this term strictly smaller than 1.
Hence, \Cref{item:double-single} of our \Cref{cor:main_bounds} is optimal, up to the ``$2/\log_2(q)$ term'' that we can suppress by taking $q$ large.

\paragraph{Exact quantum error correction.}
An exact, nondegenerate quantum error correction code is called \emph{perfect} if it exactly attains the quantum Hamming bound. In other words, an $[[n, k, d = 2t+1]]_q$ QEC is perfect if $m K = N$, for $K = q^k$, $N = q^n$, and
\begin{equation*}
    m = \sum_{i=0}^t \binom{n}{i} \cdot (q^2 - 1)^i.
\end{equation*}
Perfect codes, such as the quantum Hamming codes and the quantum twisted codes, are known to exist, but only for certain specific settings of parameters.
Indeed, the entire set of possible parameters a perfect quantum code can attain has been classified~\cite{LX09};
this classification states that a perfect $[[n, k, d]]_q$ QEC exists only if
\begin{equation*}
    n = \frac{q^{2 \ell} - 1}{q^2 - 1}, \quad
    k = n - 2 \ell, \quad\text{and }
    d = 3,
\end{equation*}
for some integer $\ell \geq 2$.  Hence, perfect exact QECs are only capable of correcting $1$-local qudit errors.
Our work highlights the power of approximate error correction, as it allows us to approximately saturate the quantum Hamming bound for a much wider range of parameters.

\paragraph{Unitary error sets.}
To our knowledge, unitary error sets have not appeared in the literature prior to our work.
In the special case when $E_1, \ldots, E_m$ also form a basis for the space of linear operators acting on $\calH$, however, then they have previously appeared in the literature under the name of \emph{unitary error bases}.
As the name suggests, unitary error bases have been studied in quantum error correction (for example, to understand the ``right'' way to model single-qudit errors in order to generalize the single-qubit Pauli errors~\cite{Kni96}),
but they have also been studied in areas such as quantum teleportation and superdense coding~\cite{Wer01,NY23}.
A central goal of this line of research has been to construct and classify the different unitary error bases~\cite{VW00,MV16}, and it is known, for example, that there exist multiple unitary error bases which are not equivalent to each other under any unitary change-of-basis~\cite{Wer01}. (And thus, there exist unitary error bases which are not just ``rotated Pauli matrices''.)
As far as we can tell, unitary error sets strictly generalize the notion of unitary error bases: for example, it is not obvious whether every unitary error set $E_1, \ldots, E_m$ can be extended to a unitary basis. Thus, there may even be unitary error sets which do not appear as subsets of any unitary error basis.

\subsection{Notation}

Throughout this paper, we will write $\{E_1, \ldots, E_m\}$ for an arbitrary unitary error set on $\C^N$ and set $\calE = \Span\{E_1, \ldots, E_m\}$. 
We will always assume that $N, K, m$ are positive integers satisfying $Km \leq N$.

We use \textbf{boldface} to denote random variables, 
$\mathrm{i} = \sqrt{-1}$ to denote the imaginary unit, $I_d$ to denote the $d\times d$ identity matrix, and
$[d] = \{1, \dots, d\}$. We use $s_{\max}(\cdot)$ and $s_{\min}(\cdot)$ to denote the largest and smallest singular value of a matrix respectively. 
We will write $\norm{\ket{\psi}}$ for the $2$-norm of a vector $\ket{\psi}$, $\norm{A}$ for the operator norm of a matrix $A$, $\norm{\cdot}_1$ for the trace norm, and $\norm{\cdot}_\diamond$ for the diamond norm.

\section{Approximate nondegenerate codes are AQECs}\label{sec:approx-nondegenerate-codes-are-aqecs}

The goal of this section is to prove that any approximate nondegenerate code can be approximately decoded. 

Let $V: \mathbb{C}^K \mapsto \mathbb{C}^N$ be the encoding isometry for the code. $V$ can be written as $\sum_{j \in [K]} \ketbra{v_j}{j}$ where the vectors $\ket{v_1},\dots,\ket{v_K}$ form an orthonormal basis for the codespace. The code is a $\delta$-approximate nondegenerate code with respect to the unitary error set $\{E_1,\dots,E_m\}$ if and only if $\{E_i \cdot \ket{v_j}\}_{i \in [m],j \in [K]}$ is a $\delta$-approximately orthonormal basis (\Cref{def:approx_ortho_basis}).

To show that this code can be approximately decoded, we will define an explicit decoding channel $\Dec$, which depends on $V$ and the unitary error set $\{E_i\}_{i \in [m]}$. The main theorem of this section is that $\Dec$ approximately recovers any quantum state that has been encoded by $V$ and corrupted by noise channels supported on the error set $\{E_i\}_{i \in [m]}$.

\subsection{The decoding channel and the error-diagnosing isometry}

In this subsection, we will define the decoding channel $\Dec$.

\paragraph{Intuition for the decoding channel: the perfectly nondegenerate case} To give intuition for how we define $\Dec$, suppose that the code is perfectly nondegenerate, i.e.\ the vectors $\{E_i \cdot \ket{v_j}\}_{i \in [m],j \in [K]}$ are perfectly orthogonal. Suppose we encode a state $\ket{\phi} = \sum_{j \in [K]} \alpha_j \ket{j}$ using the isometry $V$, obtaining
\begin{equation*}
    V \ket{\phi} = V \cdot \sum_{j \in [K]} \alpha_j \ket{j} = \sum_{j \in [K]} \alpha_j \ket{v_j}.
\end{equation*}
Suppose an error $F = \sum_{i \in [m]} c_i E_i$ occurs on $V \ket{\phi}$, resulting in the state 
\begin{equation*}
    F \cdot V \ket{\phi} = \Big( \sum_{i \in [m]} c_i E_i \Big) \Big( \sum_{j \in [K]} \alpha_j \ket{v_j} \Big) = \sum_{i \in [m], j \in [K]} c_i \alpha_j E_i \ket{v_j}.
\end{equation*}
To decode this state, we first define the decoding isometry
    \begin{equation*}
        D = \sum_{i \in [m], j \in [K]} \ketbra{j,i}{v_j} E_i^\dagger,
    \end{equation*}
    which maps $E_i \ket{v_j} \mapsto \ket{j} \ket{i}$ for all $i, j$. Technically, $D: \mathbb{C}^N \mapsto \mathbb{C}^K \otimes \mathbb{C}^m$ is a \emph{partial} isometry: it is an isometry mapping the $Km$-dimensional subspace $\Im(D^\dagger) = \Span\{E_i \ket{v_j}: i \in [m],j \in [K]\}$ of $\mathbb{C}^N$ to the space $\mathbb{C}^{K} \otimes \mathbb{C}^{m}$, but acts as the $0$ operator on the subspace of $\mathbb{C}^N$ orthogonal to $\Im(D^\dagger)$. 
    
Using the decoding isometry $D$, we can decode the state $F \cdot V \ket{\phi}$ as follows:
\begin{enumerate}
    \item
    Apply $D$, which yields
    \begin{equation*}
        D \cdot \Big( F \cdot V \ket{\phi}\Big) = D \cdot \Big(\sum_{i \in [m], j \in [K]} c_i \alpha_j E_i \ket{v_j}\Big) = \sum_{j \in [K]} \alpha_j \ket{j} \otimes \sum_{i \in [m]} c_i \ket{i}
    \end{equation*}
    \item Trace out the $\ket{i}$ register. This exactly recovers the original state $\sum_{j \in [K]} \alpha_j \ket{j} = \ket{\phi}$.
\end{enumerate}

\paragraph{The approximate case.} In our setting, the states $\{E_i \ket{v_j}\}$ are not guaranteed to be orthogonal, and thus
\begin{equation*}
    \widehat{D} \coloneqq \sum_{i \in [m], j \in [K]} \ketbra{j,i}{v_j} E_i^\dagger = \sum_{i\in [m]} V^\dagger E_i^\dagger \otimes \ket{i},
\end{equation*}
is not necessarily a partial isometry, i.e.\ $\widehat{D}$ does not correspond to a physically allowable operation. 

However, if the vectors $\{E_i \ket{v_j}\}_{i \in [m], j \in [K]}$ are a $\delta$-approximately orthonormal basis, then the singular values of $\widehat{D}$ lie between $1-\delta$ and $1+\delta$. This motivates defining our decoding isometry $D$ by rounding the singular values of $\widehat{D}$ as follows:
\begin{itemize}
    \item Consider the singular value decomposition of $\widehat{D} = U_1 \cdot \widehat{\Sigma} \cdot U_2$. By assumption, $\widehat{\Sigma}$ is a rectangular diagonal matrix whose diagonal entries are between $1-\delta$ and $1+\delta$.
    \item Let $\Sigma$ be the matrix obtained by replacing the non-zero entries of $\widehat{\Sigma}$ with $1$. Define $D\coloneqq U_1 \Sigma U_2$.
\end{itemize}
Note that $D$ is close to $\widehat{D}$ in operator norm, since
\begin{equation}
    \norm*{\widehat{D} - D} = \norm*{U_1 (\widehat{\Sigma} - \Sigma) U_2} = \norm*{\widehat{\Sigma} - \Sigma } \leq \delta .\label{eq:op-norm-D-Dhat} 
\end{equation}

\begin{proposition}
\label{prop:D-partial-iso}
$D$ is a partial isometry where $\Im(D) = \mathbb{C}^K \otimes \mathbb{C}^m$.
\end{proposition}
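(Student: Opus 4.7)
The plan is to directly compute $DD^{\dagger}$ from the SVD definition and show it equals the identity on $\C^K \otimes \C^m$, which immediately gives both the partial isometry property and the surjectivity onto $\C^K \otimes \C^m$.

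First I would unpack the dimensions. Since $\widehat{D}$ maps $\C^N \to \C^K \otimes \C^m$ and we are assuming $Km \leq N$, its SVD $\widehat{D} = U_1 \widehat{\Sigma} U_2$ consists of a $Km \times Km$ unitary $U_1$, a rectangular diagonal $Km \times N$ matrix $\widehat{\Sigma}$ with $Km$ singular values on the diagonal, and an $N \times N$ unitary $U_2$. The $\delta$-approximate nondegeneracy assumption, together with the equivalent formulation in Definition~\ref{def:approx_nondeg_code} that $\widehat{D}^{\dagger} = \sum_i E_i V \otimes \bra{i}$ is a $\delta$-approximate isometry, tells us every singular value of $\widehat{D}$ lies in $[1-\delta, 1+\delta]$. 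Since $\delta < 1$, all $Km$ singular values are strictly positive, so the ``nonzero entries'' of $\widehat{\Sigma}$ are precisely the $Km$ diagonal entries, and after rounding, $\Sigma$ is the $Km \times N$ rectangular diagonal matrix whose diagonal is identically $1$.

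The main computation is then immediate: $\Sigma \Sigma^{\dagger}$ is the $Km \times Km$ matrix $I_{Km}$, so
\begin{equation*}
    D D^{\dagger} \;=\; U_1 \Sigma U_2 U_2^{\dagger} \Sigma^{\dagger} U_1^{\dagger} \;=\; U_1 \Sigma \Sigma^{\dagger} U_1^{\dagger} \;=\; U_1 U_1^{\dagger} \;=\; I_{Km}.
\end{equation*}
From $DD^{\dagger} = I_{Km}$ I can read off both required conclusions. First, $DD^{\dagger}D = D$, which is the standard characterization of a partial isometry (equivalently, $D^{\dagger}D$ is an orthogonal projection onto $\Im(D^{\dagger})$). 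Second, $DD^{\dagger} = I_{Km}$ means $D$ is surjective onto its codomain $\C^K \otimes \C^m$, so $\Im(D) = \C^K \otimes \C^m$.

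There isn't really a hard step here; the only thing to be careful about is making sure the dimensions and the assumption $Km \leq N$ line up so that $\Sigma$ does in fact have $Km$ ones on the diagonal (and not fewer), which is where the condition $\delta < 1$ is used to guarantee that no singular value got rounded down to $0$. Everything else is bookkeeping about the SVD.
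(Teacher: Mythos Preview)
Your proof is correct and follows essentially the same approach as the paper's: both analyze the SVD of $\widehat{D}$, use $\delta < 1$ to ensure all $Km$ singular values are strictly positive, and conclude from the structure of $\Sigma$. Your version packages this slightly more cleanly by computing $DD^{\dagger} = I_{Km}$ directly, whereas the paper separately argues that the singular values of $D$ are all $0$ or $1$ and that $\Im(D) = \Im(\widehat{D}) = \mathrm{colspan}(U_1) = \C^K \otimes \C^m$; but these are two phrasings of the same SVD bookkeeping.
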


\begin{proof}
    Recall that an operator $D$ is a partial isometry if and only if all of its singular values are $0$ or $1$. This is guaranteed by the way we constructed $D$. It remains to show that $\Im(D) = \mathbb{C}^K \otimes \mathbb{C}^m$.
    
    By assumption, $\widehat{D}$ is a $\delta$-approximate isometry, and so all $K\cdot m$ of its singular values are non-zero (since they are bounded between $1-\delta$ and $1+\delta$ for $\delta < 1$). Thus, $\Im(\widehat{D})$ corresponds to the column span of the unitary $U_1$, which equals $\mathbb{C}^K \otimes \mathbb{C}^m$. Finally, the construction of $\widehat{D}$ guarantees that $\Im(D) = \Im(\widehat{D})$. 
\end{proof}

Given a noisy codestate, which is guaranteed to be in $\Im(D^\dagger) = \Span\{E_i \ket{v_j}: i \in [m], j \in [K]\}$, we can decode using the same strategy that we used in the perfect orthogonality case: apply the decoding isometry $D$ and trace out the $\ket{i}$ register. For completeness, we define our decoding channel $\Dec$ to act not just on states in $\Im(D^\dagger)$, but on all of $\mathbb{C}^N$. Our full decoding procedure $\Dec$ is as follows:
\begin{definition}
\label{def:recovery}
The decoding channel $\Dec$ works as follows:
\begin{enumerate}
    \item Apply the binary projective measurement $\{D^\dagger D, I_N - D^\dagger D\}$ to test if the state is in $\Im(D^\dagger)$. If not, output an arbitrary state in $\mathbb{C}^K$ (say, a maximally mixed state) and skip the remaining steps.
    \item Apply the decoding isometry $D$.
    \item Trace out the $\ket{i}$ register. 
\end{enumerate}
\end{definition}

\subsection{Recovering from general noise channels}

We now prove that our decoding channel $\Dec$ works for an arbitrary noise channel $\calN$ with Kraus operators $\{K_r\}_r$ where $\sum_r K_r^\dagger K_r = I$ and $K_r = \sum_{i\in [m]} c_{r,i} E_i$ for some coefficient $c_{r,i}$. It will be convenient to state our technical lemma in terms of the Stinespring dilation of $\calN$. That is, let $E_{\calN}$ be the isometry that maps $\ket{\psi}$ to $\sum_r K_r \ket{\psi} \ket{r}$. The channel $\calN$ can be implemented by applying the isometry $E_{\calN}$ and tracing out $\ket{r}$. The fact that $E_{\calN}$ is an isometry readily follows from the condition that $\sum_r K_r^\dagger K_r = I$. 

\begin{proposition}
    When $\{E_i\}_{i}$ is a unitary error set, the coefficients $c_{r,i}$ satisfy $\sum_{r,i} \abs{c_{r,i}}^2 = 1$.
\end{proposition}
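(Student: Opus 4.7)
The plan is to exploit the Hilbert–Schmidt orthogonality built into the definition of a unitary error set together with the trace-preserving condition on the Kraus operators. First I would recall that for a unitary error set $\{E_1,\ldots,E_m\}$ on $\C^N$, unitarity gives $\tr(E_i^\dagger E_i) = N$, while the defining orthogonality condition gives $\tr(E_i^\dagger E_j)=0$ whenever $i\neq j$. In other words, the rescaled operators $\{E_i/\sqrt{N}\}$ form an orthonormal set with respect to the Hilbert–Schmidt inner product.

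Next I would use the trace-preserving condition $\sum_r K_r^\dagger K_r = I_N$. Taking the trace on both sides yields $\sum_r \tr(K_r^\dagger K_r) = N$. Now I would expand each summand using the expression $K_r = \sum_i c_{r,i} E_i$:
\begin{equation*}
    K_r^\dagger K_r = \sum_{i,j} \overline{c_{r,i}}\, c_{r,j}\, E_i^\dagger E_j.
\end{equation*}
Taking the trace and invoking the Hilbert–Schmidt orthogonality relations above collapses the double sum to the diagonal:
\begin{equation*}
    \tr(K_r^\dagger K_r) = \sum_{i,j} \overline{c_{r,i}}\, c_{r,j}\, \tr(E_i^\dagger E_j) = N \sum_i \abs{c_{r,i}}^2.
\end{equation*}

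Finally I would sum over $r$ and divide by $N$ to conclude $\sum_{r,i}\abs{c_{r,i}}^2 = 1$. There is no real obstacle here; the statement is essentially the Parseval identity for the Hilbert–Schmidt basis induced by the unitary error set, and the only thing to be careful about is correctly pairing conjugates in the expansion of $K_r^\dagger K_r$ so that the orthogonality relations apply cleanly.
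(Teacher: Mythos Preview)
Your proof is correct and follows essentially the same approach as the paper: take the trace of the completeness relation $\sum_r K_r^\dagger K_r = I_N$, expand each $K_r$ in the $E_i$'s, and use the Hilbert--Schmidt orthogonality $\tr(E_i^\dagger E_j) = N\,\delta_{ij}$ to collapse the cross terms.
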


\begin{proof}
    Taking the trace of both sides of the equation $\sum_r K_r^\dagger K_r = I$, we have $\Tr(\sum_r K_r^\dagger K_r) = \Tr(I) = N$. Then, expanding $\Tr(\sum_r K_r^\dagger K_r)$ in terms of the coefficients $c_{r,i}$ yields
\begin{equation*}
    \Tr\Big(\sum_r K_r^\dagger K_r\Big) = \sum_r \Tr\Big(\Big(\sum_i c_{r,i}E_i\Big)^\dagger \Big(\sum_{i'} c_{r,i'}E_{i'}\Big)\Big) = \sum_r \sum_{i,i'} \overline{c_{r,i}} c_{r,i'} \Tr(E_i^\dagger E_{i'}) = \sum_{r,i} \abs{c_{r,i}}^2 \cdot N,
\end{equation*}
    which completes the proof.
\end{proof}

\begin{lemma}
\label{lemma:noise-isometry}
    Let $V$ be the encoding isometry of a $\delta$-approximate nondegenerate code with respect to a unitary error set $\{E_i\}_i$.
    Fix a noise isometry $E_{\calN}: \ket{\psi} \mapsto \sum_r K_r \ket{\psi}\ket{r}$, where each $K_r = \sum_i c_{r,i} E_i$ for some coefficients $c_{r,i}$. Fix an input state $\ket{\phi}_{\sA,\sB}$, where $\sA$ is a $K$-dimensional Hilbert space corresponding to the message register of the code and $\sB$ is some ancillary system. Let $\ket{c} = \sum_{r,i} c_{r,i} \ket{i}\ket{r}$. Then
    \begin{equation}
        \norm{D \cdot E_{\calN} \cdot V \cdot \ket{\phi} - \ket{\phi} \ket{c} } \leq \delta,\label{eq:nondegen-to-aqec}
    \end{equation}
    where $D$ acts as identity on the $\ket{r}$ register and $D, E_{\calN}, V$ all act as identity on system $\sB$. 
\end{lemma}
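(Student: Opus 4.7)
The plan is to reduce the lemma to a short operator-norm calculation by first establishing an algebraic identity that relates the noisy codestate $E_{\calN} V \ket{\phi}$ to the ``target'' decoded output $\ket{\phi}\ket{c}$ through the unrounded decoder $\widehat{D}^\dagger$. Concretely, I would first prove that
\[
    E_{\calN} V \ket{\phi} = (\widehat{D}^\dagger \otimes I_R)(\ket{\phi} \otimes \ket{c}),
\]
where $\widehat{D}^\dagger$ acts on the message register together with the Pauli-index register $\C^m$ (the register holding $\ket{i}$), while acting as identity on $\sB$ and on the Stinespring register $R$ (the register holding $\ket{r}$). Taking the adjoint of the definition of $\widehat{D}$ gives $\widehat{D}^\dagger = \sum_i E_i V \otimes \bra{i}$, so the right-hand side expands by direct computation to $\sum_{r,i} c_{r,i}\, E_i V \ket{\phi}\ket{r} = \sum_r K_r V \ket{\phi}\ket{r} = E_{\calN} V \ket{\phi}$. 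Intuitively, $\widehat{D}^\dagger$ is precisely the map that \emph{synthesizes} a noisy codeword from an ``ideal'' decoded output $\ket{\phi}\ket{c}$, so applying our actual decoder $D$ should approximately invert this.

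Applying $D$ to both sides of the identity reduces the lemma to the operator bound $\norm{D \widehat{D}^\dagger - I_{Km}} \leq \delta$, combined with the fact that $\ket{\phi}$ and $\ket{c}$ are unit vectors. For the latter, the proposition immediately preceding the lemma gives $\norm{\ket{c}}^2 = \sum_{r,i}|c_{r,i}|^2 = 1$. For the operator bound, I would invoke \Cref{prop:D-partial-iso}, which says $D$ is a partial isometry with image $\C^K \otimes \C^m$, so that $DD^\dagger = I_{Km}$. Then
\[
    \norm{D \widehat{D}^\dagger - I_{Km}} = \norm{D(\widehat{D}^\dagger - D^\dagger)} \leq \norm{D} \cdot \norm{\widehat{D}^\dagger - D^\dagger} \leq \delta,
\]
where the final inequality uses $\norm{D} \leq 1$ together with the operator-norm bound $\norm{\widehat{D} - D} \leq \delta$ from \eqref{eq:op-norm-D-Dhat}. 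Chaining these pieces yields $\norm{D E_{\calN} V \ket{\phi} - \ket{\phi}\ket{c}} \leq \delta$, as required.

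The only conceptually nontrivial step is the first identity; once one recognizes that the Stinespring isometry of $\calN$ applied after encoding is dual to the unrounded decoder $\widehat{D}^\dagger$, the remainder of the proof is a one-line consequence of the fact that rounding singular values to $1$ makes $D$ a partial isometry with $DD^\dagger = I_{Km}$ while only perturbing $\widehat{D}$ by $\delta$ in operator norm. The main technical subtlety to watch is the tensor-product bookkeeping: making sure $\widehat{D}^\dagger$ is applied on the correct two registers while $\sB$ and the Stinespring ancilla $R$ pass through untouched.
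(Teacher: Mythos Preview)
Your proposal is correct and follows essentially the same argument as the paper: both establish the identity $E_{\calN} V \ket{\phi} = \widehat{D}^\dagger \ket{\phi}\ket{c}$ by expanding definitions, and then bound $\norm{D\widehat{D}^\dagger - I_{Km}} \leq \delta$ using $DD^\dagger = I_{Km}$ from \Cref{prop:D-partial-iso} and $\norm{\widehat{D}-D}\leq\delta$ from \eqref{eq:op-norm-D-Dhat}. Your treatment is slightly more explicit about the tensor bookkeeping and the use of $\norm{\ket{c}}=1$, but there is no substantive difference.
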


\begin{proof}
    Using the definitions of $E_{\calN}$ and $\widehat{D}$, 
    \begin{equation*}
        D \cdot E_{\calN} \cdot V \cdot \ket{\phi} = D \cdot \sum_{r,i} c_{r,i} E_i \cdot V \cdot \ket{\phi} \ket{r} = D \cdot \widehat{D}^\dagger \ket{\phi} \ket{c}.
    \end{equation*}
    Then the left-hand side of~\Cref{eq:nondegen-to-aqec} can be written as
    \begin{equation*}
        \norm{D \cdot \widehat{D}^\dagger \ket{\phi} \ket{c} - \ket{\phi}\ket{c}} \leq \norm{D \cdot \widehat{D}^\dagger - I} \leq \delta ,
    \end{equation*}
where the last inequality uses the fact that $D \cdot D^\dagger = \Im(D) = I_{Km}$ (by~\Cref{prop:D-partial-iso}) and that $\norm*{\widehat{D}^\dagger - D^\dagger} \leq \delta$ (previously established in~\Cref{eq:op-norm-D-Dhat}).
\end{proof}

We are now ready to prove our first main result, \Cref{thm:decode}, restated below for convenience. 

\maindecode*
\begin{proof}
    Fix any state $\ket{\phi}_{\sA,\sB}$ where $\sA$ is a $K$-dimensional Hilbert space corresponding to the message register of the code, and let $E_{\calN}$ be the Stinespring dilation of $\calN$. Since $\norm{\ketbra{u} - \ketbra{v}}_1 \leq 2 \cdot \norm{\ket{u} - \ket{v}}$, we have 
    \begin{align*}
        \norm{D \cdot E_{\calN} \cdot V \cdot \ketbra{\phi} \cdot V^\dagger \cdot E_{\calN}^\dagger \cdot D^\dagger - \ketbra{\phi} \otimes \ketbra{c}}_1 &\leq 2\cdot \norm{D \cdot E_{\calN} \cdot V \cdot \ket{\phi} - \ket{\phi} \ket{c} } \\
        &\leq 2\delta. \tag{\Cref{lemma:noise-isometry}}
    \end{align*}
    Let $\mathsf{C}$ denote the register corresponding to $\ket{c}$. Note that the channel $\Dec \circ \calN \circ \Enc$ is equivalent to applying $D \cdot E_{\calN} \cdot V$ and tracing out $\sC$. Since tracing out the $\sC$ system cannot increase the $1$-norm, we have
    \begin{equation*}
        \max_{\ket{\phi}_{\sA,\sB}} \norm{(\Dec\circ\calN \circ \Enc \otimes I_{\sB})(\ketbra{\phi}) - \ketbra{\phi}_{\sA,\sB}}_1 \leq 2 \delta,
    \end{equation*}
    which implies the diamond distance bound $\norm{\Dec \circ \calN \circ \Enc - \calI}_\diamond \leq 2 \delta$, so the disturbance is at most $\delta$.
    \end{proof}

\section{Haar random codes are approximate nondegenerate codes}\label{sec:proof_shifted_basis}

The goal of this section is to prove~\Cref{thm:haar} that Haar random codes are approximate nondegenerate codes. Concretely, we will prove that if $\bV: \mathbb{C}^{K} \mapsto \mathbb{C}^{N}$ is a Haar random isometry and $\{E_1,\dots,E_m\}$ is a unitary error set, then the matrix
\begin{equation*}
    \bY = \sum_{i=1}^{m} E_i \bV \otimes \bra{i}
\end{equation*}
is an approximate isometry with high probability.

Our proof will leverage the fact that when $K \ll N$, a Haar random isometry $\bV$ is well-approximated by an (appropriately scaled) matrix of i.i.d.\ complex Gaussians. In particular, the first step of our proof will be to replace $\bV$ with $\bG$, an $N \times K$ dimensional matrix where each entry is an i.i.d.\ complex Gaussian with mean $0$ and variance $1/N$, yielding the matrix
\begin{equation*}
    \bX = \sum_{i=1}^{m} E_i  \bG \otimes \bra{i}.
\end{equation*}
This is an $N \times m N$ matrix whose entries are correlated Gaussians, and our goal is to show that its singular values are close to~1 with high probability.
There is a large body of literature in random matrix theory studying questions of precisely this form,
and the high level message of this literature is that proving bounds on the singular values is easy when the matrix entries have a simple covariance structure (e.g.\ when the entries are independent and identically distributed Gaussians), but it becomes more difficult the more complicated the covariances become.
In our case, we were able to prove that $\bX$ is an approximate isometry when $mK \ll N /\polylog(N)$ using standard random matrix theory tools such as decoupling, the matrix Chernoff bound, and the matrix Khintchine inequality. (See~\cite{Van17} for an overview of these standard tools.)
However, achieving the optimal bound of $mK \ll N$ required using a recent and powerful matrix concentration result of Bandeira, Boedihardjo, and van Handel~\cite{BBV23},
which can provide strong spectral norm bounds on Gaussian matrices with arbitrary covariance structures.

To finish the proof, we use the fact that a Haar random isometry $\bV$ can be sampled by first sampling a random Gaussian matrix $\bG$, and then applying an ``isometrize'' operation which rounds all of the singular values of $\bG$ to $1$ yields a Haar-distributed isometry $\bV$. We then prove~\Cref{lemma:isometrize}, which says that whenever $X = \sum_i E_i G \otimes \bra{i}$ is an approximate isometry, then the matrix $Y = \sum_i E_i V \otimes \bra{i}$, where $V$ is obtained by ``isometrizing'' $G$, is also an approximate isometry.

\subsection{Matrix concentration inequalities}

We will first introduce the Gaussian concentration inequalities developed in \cite{BBV23} which are crucial for our proofs.  
Let $\bX$ be a $d\times m$ random matrix with correlated Gaussian entries, where $d\geq m$, and suppose we are interested in proving concentration bounds on $s_{\max}(\bX)$ and $s_{\min}(\bX)$.
As correlated Gaussians can always be written as linear combinations of uncorrelated Gaussians,
we can model $\bX$ as a sum of independent standard Gaussian variables $\bg_i$ with fixed matrix coefficients $A_i$, i.e.\ $\bX = \sum_i \bg_i \cdot A_i$.
Our goal is to bound
\begin{equation*}
    s_{\max}(\bX) = \Vert \bX^{\dagger} \bX \Vert^{1/2}
\end{equation*}
with high probability,
and a common heuristic in random matrix theory is that this should typically be close to the same expression with an expectation inside the norm, i.e.
\begin{equation*}
    \Vert \E[\bX^{\dagger} \bX] \Vert^{1/2}
    = \Big\Vert \sum_i A_i^{\dagger} A_i \Big\Vert.
\end{equation*}
The main result of~\cite{BBV23} is that $s_{\max}(\bX)$ is indeed close to this value with high probability, up to some lower order terms which will be small in our case.
In particular, we will make use of the following theorem, which we will show can be extracted from the results of \cite{BBV23}.  

\begin{theorem}[Gaussian concentration inequalities]\label{thm:BBvH23_matrix_concentration}
    Let $A_1, \ldots, A_n$ be arbitrary $d\times m$ matrices with complex entries, where $d\geq m$. 
    Let $\bg_1,\ldots, \bg_n$ be i.i.d.\ real Gaussian variables with zero mean and unit variance.
    Define
    \begin{equation*}
        \bX \coloneqq \sum_{i=1}^n \bg_i \cdot A_i. 
    \end{equation*}
    Let us further define
    \begin{align*}
        \sigma(\bX)^2 &\coloneqq \max \left\{\E[\bX^\dagger \bX], \E[\bX \bX^\dagger]\right\} = \max \left\{ \norm{\sum_{i=1}^n A_i^\dagger A_i} , \quad \norm{\sum_{i=1}^n A_i A_i^\dagger} \right\}, \\
        v(\bX)^2 &\coloneqq \norm{\mathrm{Cov}(\bX)},
    \end{align*}
    where $\mathrm{Cov}(\bX)$ is the $dm \times dm$ covariance matrix of the $dm$ entries of $\bX$, i.e.\ $\mathrm{Cov}(\bX)_{ij,k\ell} = \E[\bX_{ij} \bX_{k\ell}^\dagger]$ for all $i,k\in [d]$ and $j,\ell\in [m]$. 
    Then there exists a universal constant $C>0$ such that for all $t\geq 0$, with probability at least $1-2\exp(-t^2)$ over the randomness in $\bX$, 
    \begin{equation*}
       \sqrt{s_{\min} \left(\sum_{i=1}^n A_i^\dagger A_i \right)} - \delta \leq s_{\min}(\bX) \leq s_{\max}(\bX) \leq \sqrt{s_{\max} \left(\sum_{i=1}^n A_i^\dagger A_i \right)} + \delta, 
    \end{equation*}
    where
    \begin{equation*}
        \delta = \norm{\sum_{i=1}^n A_i A_i^\dagger}^{1/2} + C \cdot \left(\sigma(\bX)^{1/2}\cdot v(\bX)^{1/2}\cdot (\log d)^{3/4} + v(\bX)\cdot t \right). 
    \end{equation*}
\end{theorem}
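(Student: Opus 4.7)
The plan is to reduce the claim to the main concentration inequality of \cite{BBV23}, which provides sharp operator-norm bounds for Hermitian Gaussian sums. Since $\bX$ is typically rectangular, my first step is to pass to the Hermitian dilation
\begin{equation*}
    \tilde{\bX} \coloneqq \begin{pmatrix} 0 & \bX \\ \bX^\dagger & 0 \end{pmatrix} = \sum_{i=1}^n \bg_i \cdot \tilde{A}_i, \quad \text{where } \tilde{A}_i \coloneqq \begin{pmatrix} 0 & A_i \\ A_i^\dagger & 0 \end{pmatrix}.
\end{equation*}
This is a Hermitian Gaussian sum of size $(d+m) \times (d+m)$ whose spectrum is exactly $\{\pm s_j(\bX) : j \in [m]\} \cup \{0^{d-m}\}$. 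In particular $\lambda_{\max}(\tilde{\bX}) = s_{\max}(\bX)$ and the smallest strictly positive eigenvalue of $\tilde{\bX}$ equals $s_{\min}(\bX)$, so two-sided control on the spectral edges of $\tilde{\bX}$ will simultaneously yield the claimed bounds on $s_{\max}(\bX)$ and $s_{\min}(\bX)$.

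Next, I would translate the BBV23 parameters for $\tilde{\bX}$ back to those for $\bX$. A short block-matrix computation gives
\begin{equation*}
    \sum_i \tilde{A}_i^2 = \begin{pmatrix} \sum_i A_i A_i^\dagger & 0 \\ 0 & \sum_i A_i^\dagger A_i \end{pmatrix},
\end{equation*}
so $\sigma(\tilde{\bX})^2 = \max\bigl(\norm{\sum_i A_i A_i^\dagger}, \norm{\sum_i A_i^\dagger A_i}\bigr) = \sigma(\bX)^2$, and an analogous check of the covariance matrix of $\tilde{\bX}$ shows $v(\tilde{\bX}) = v(\bX)$. Applying BBV23 to $\tilde{\bX}$ would then give, with probability at least $1 - 2e^{-t^2}$, that the spectrum of $\tilde{\bX}$ approximates the spectrum of its associated free (operator-valued semicircular) model to within an additive error of $C(\sigma(\bX)^{1/2} v(\bX)^{1/2} (\log d)^{3/4} + v(\bX) \cdot t)$.

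The final ingredient is an explicit bound on the extremal edges of this free model. Exploiting the block structure of $\tilde{\bX}$, I would argue that the upper edge of its free spectrum is at most $\sqrt{s_{\max}(\sum_i A_i^\dagger A_i)} + \norm{\sum_i A_i A_i^\dagger}^{1/2}$ while the lower edge of the positive part is at least $\sqrt{s_{\min}(\sum_i A_i^\dagger A_i)} - \norm{\sum_i A_i A_i^\dagger}^{1/2}$; combining these with the concentration estimate above then produces exactly the two inequalities in the theorem. I expect this last step to be the main obstacle: the crude symmetric bound $\norm{\tilde{\bX}^{\mathrm{free}}} \leq 2\sigma(\tilde{\bX})$ would lose too much whenever $\sum_i A_i A_i^\dagger$ and $\sum_i A_i^\dagger A_i$ have very different operator norms, which is precisely the regime relevant to the application to Haar random codes, so one must work with the operator-valued semicircle directly rather than settle for a scalar comparison.
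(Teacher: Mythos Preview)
Your Hermitian-dilation plan is a genuinely different route from the paper's. The paper does not dilate at all: it simply observes that \cite{BBV23} already contains results stated directly for rectangular Gaussian sums, and cites them. The lower bound on $s_{\min}(\bX)$ is taken verbatim from \cite[Theorem~3.14 and Lemma~3.15]{BBV23}. The upper bound on $s_{\max}(\bX)$ in the square case $d=m$ comes from \cite[Corollary~2.2 and Lemma~2.5 (Pisier)]{BBV23}, and the case $d>m$ is handled by a one-line zero-padding trick: append $d-m$ zero columns to each $A_i$, check that this leaves $s_{\max}$, $\sigma$, and $v$ unchanged, and apply the square result. The asymmetric free-edge bound you flag as ``the main obstacle'' is exactly the Pisier inequality the paper cites rather than reproves.

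Two technical remarks on your plan. First, the identity $v(\tilde{\bX})=v(\bX)$ is not literally true: the vectorisation of $\tilde{A}_i$ carries both the entries of $A_i$ and of $A_i^\dagger$, so $\mathrm{Cov}(\tilde{\bX})$ has additional blocks coming from cross-covariances between $\bX$ and $\bX^\dagger$, and in simple cases one finds $v(\tilde{\bX})^2=2\,v(\bX)^2$. This is harmless since it is absorbed into $C$, but it is worth stating carefully. Second, and more seriously, your route to the $s_{\min}$ bound has a gap. When $d>m$ the dilation $\tilde{\bX}$ has a deterministic zero eigenvalue of multiplicity $d-m$, and the free model $\tilde{\bX}_{\mathrm{free}}$ likewise has $0$ in its spectrum. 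The spectral inclusion delivered by the main theorem of \cite{BBV23} then only tells you that each singular value $s_j(\bX)$ lies in $[0,\epsilon]\cup[\alpha-\epsilon,\beta+\epsilon]$, where $[\alpha,\beta]$ is the positive part of the free spectrum; it does \emph{not} rule out that the smallest singular value has fallen into the $[0,\epsilon]$ window near the free zero. Closing this gap requires either a multiplicity argument or a direct analysis of the smallest singular value, and this is precisely what \cite{BBV23} develops separately in their Section~3 and what the paper invokes directly rather than rederives.
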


\begin{proof}
    The lower bound for $s_{\min}(\bX)$ follows from \cite[Theorem 3.14 and Lemma 3.15]{BBV23}. 

    The upper bound for $s_{\max}(\bX)$ when $d=m$ follows from \cite[Corollary 2.2 and Lemma 2.5 (Pisier)]{BBV23}. 
    The case of $A_i$'s being non-square matrices, i.e.\ $d>m$, can be reduced to the square case by padding each $A_i$ with $d-m$ columns of zeros on the right. 
    More concretely, let us denote these zeros-padded square matrices as $B_i$'s and let $\bX' \coloneqq \sum_{i=1}^n \bg_i \cdot B_i$. 
    Then we have the following observations: 
    \begin{enumerate}[label=(\arabic*)]
        \item $s_{\max}(\bX') = s_{\max}(\bX)$. 
        \item $\sum_{i=1}^n B_i B^\dagger_i = \sum_{i=1}^n A_i A_i^\dagger$. 
        \item $\sum_{i=1}^n B_i^\dagger B_i$ is a $d\times d$ matrix where the top left $m\times m$ block equals $\sum_{i=1}^n A_i^\dagger A_i$ and all other entries are zeros. 
        \item $\text{Cov}(\bX')$ is a $d^2 \times d^2$ matrix where the top left $dm\times dm$ block equals $\text{Cov}(\bX)$ and all the other entries are zeros, i.e.\ for all $i,j,k,\ell \in [d]$, 
        \begin{equation*}
            \text{Cov}(\bX')_{ij, k\ell} = \begin{cases}
                \text{Cov}(\bX)_{ij, k\ell} & \text{if }j\leq m, \ell \leq m, \\
                0 & \text{else}. 
            \end{cases}
        \end{equation*}
    \end{enumerate}
    We note that $(2)$ and $(3)$ imply that $\sigma(\bX') = \sigma(\bX)$, and $(4)$ implies that $v(\bX') = v(\bX)$. 
\end{proof}

\subsection{Approximate isometries from Gaussian random matrices}

\begin{definition}[Complex Gaussian random variable]
    We say that $\bg$ is a \emph{complex Gaussian random variable with mean $\mu$ and variance $\sigma^2$} if
    \begin{equation*}
        \bg = \frac{1}{\sqrt{2}}\cdot \left(\bg^{\R} + \bg^{\C}\cdot \mathrm{i}\right),
    \end{equation*}
    where $\bg^{\R}$ and $\bg^{\C}$ are independent real Gaussian random variables each with mean $\mu$ and variance $\sigma^2$. 
\end{definition}

\begin{lemma}[Approximate isometry from Gaussian random matrix]\label{lem:gaussian_concentration}
    Let $\bG$ be an $N\times K$ matrix where each entry is an independent complex Gaussian random variable with mean zero and variance $1/N$. 
    Then with probability $1-2/N^{(\log N)^2}$ over the randomness in $\bG$,
    the matrix $\bX = \sum_{i=1}^m E_i \bG \otimes \bra{i}$ is a $\delta$-approximate isometry,
    where 
    \begin{equation*}
        \delta \leq \sqrt{\frac{Km}{N}} + C \cdot \sqrt{\frac{m \cdot (\log N)^3 }{N}}, 
    \end{equation*}
    for some universal constant $C > 0$. 
\end{lemma}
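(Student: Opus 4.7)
My plan is to apply the Gaussian matrix concentration bound \Cref{thm:BBvH23_matrix_concentration} directly to $\bX = \sum_{i=1}^{m} E_i \bG \otimes \bra{i}$, viewed as a random $N \times Km$ matrix. To put $\bX$ into the form required by that theorem (a Gaussian series driven by real standard Gaussians with complex matrix coefficients), I would first decompose each complex Gaussian entry as $\bG_{cb} = \tfrac{1}{\sqrt{2N}}(\tilde{\bg}^{R}_{cb} + \mathrm{i}\, \tilde{\bg}^{C}_{cb})$ with $\tilde{\bg}^{R}_{cb}, \tilde{\bg}^{C}_{cb}$ i.i.d.\ real standard Gaussians. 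This exhibits $\bX$ as a Gaussian series with coefficients $A^{R}_{cb} = \tfrac{1}{\sqrt{2N}}\sum_{\ell} E_{\ell} \ketbra{c}{b} \otimes \bra{\ell}$ and $A^{C}_{cb} = \mathrm{i}\, A^{R}_{cb}$.

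Next I would compute the three parameters feeding into \Cref{thm:BBvH23_matrix_concentration}: $\sum_{\alpha} A_{\alpha}^{\dagger} A_{\alpha}$, $\sum_{\alpha} A_{\alpha} A_{\alpha}^{\dagger}$, and $v(\bX)^2 = \norm{\mathrm{Cov}(\bX)}$. The key inputs are the unitary error set properties $E_{\ell} E_{\ell}^{\dagger} = I_N$ and $\mathrm{tr}(E_{\ell}^{\dagger} E_{\ell'}) = N\delta_{\ell \ell'}$, together with the standard identities $\E[\bG^{\dagger} M \bG] = (\mathrm{tr}(M)/N)\, I_K$ and $\E[\bG \bG^{\dagger}] = (K/N) I_N$. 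A short calculation then gives $\E[\bX^{\dagger} \bX] = I_{Km}$ and $\E[\bX \bX^{\dagger}] = (Km/N)\, I_N$. So (using the hypothesis $Km \leq N$) one obtains $\sigma(\bX) = 1$, $s_{\min}(\sum_{\alpha} A_{\alpha}^{\dagger} A_{\alpha}) = s_{\max}(\sum_{\alpha} A_{\alpha}^{\dagger} A_{\alpha}) = 1$, and $\norm{\sum_{\alpha} A_{\alpha} A_{\alpha}^{\dagger}}^{1/2} = \sqrt{Km/N}$, which is exactly the first term of the claimed $\delta$. For the covariance, a direct entry-wise computation yields $\mathrm{Cov}(\bX)_{(a,b,\ell),(a',b',\ell')} = \tfrac{1}{N}\delta_{bb'}(E_{\ell} E_{\ell'}^{\dagger})_{aa'}$, which after reindexing equals $\tfrac{1}{N}\, I_K \otimes FF^{\dagger}$ with $F = \sum_{\ell} E_{\ell} \otimes \ket{\ell}$; unitarity of the $E_\ell$'s gives $F^{\dagger} F = m I_N$ and hence $\norm{FF^{\dagger}} = m$, so $v(\bX) = \sqrt{m/N}$.

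The final step is to invoke \Cref{thm:BBvH23_matrix_concentration} with $t = (\log N)^{3/2}$. The failure probability $2 e^{-t^2}$ becomes $2/N^{(\log N)^2}$ (taking $\log$ to be the natural logarithm), and the tail term $v(\bX)\cdot t$ evaluates to exactly $\sqrt{m(\log N)^3/N}$, matching the second term in the claimed bound. The one step requiring care --- and the main obstacle --- is the intermediate correction $\sigma(\bX)^{1/2} v(\bX)^{1/2}(\log N)^{3/4} = (m(\log N)^3/N)^{1/4}$ that \Cref{thm:BBvH23_matrix_concentration} also produces, as this term does not literally have the form of either of the two terms appearing in the lemma. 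However, by AM--GM it is bounded by $\tfrac{1}{2}(1 + \sqrt{m(\log N)^3/N})$, and in the parameter regime where the stated bound is informative it can be absorbed into the constant $C$ multiplying $\sqrt{m(\log N)^3/N}$. Putting everything together yields the claimed bound $\delta \leq \sqrt{Km/N} + C\sqrt{m(\log N)^3/N}$ with probability $1 - 2/N^{(\log N)^2}$.
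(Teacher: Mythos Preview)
Your proposal is essentially identical to the paper's proof: the same real/imaginary decomposition producing the coefficients $A^{R}_{cb}, A^{C}_{cb}$, the same computations $\sum A^{\dagger}A = I_{Km}$, $\sum A A^{\dagger} = (Km/N)I_N$, and $v(\bX) = \sqrt{m/N}$ (the paper obtains the latter via a unitary conjugation that is equivalent to your $FF^{\dagger}$ factorization), and the same choice $t = (\log N)^{3/2}$. Regarding the obstacle you flag, the paper simply asserts $(m(\log N)^3/N)^{1/4} + (m(\log N)^3/N)^{1/2} \leq 2\,(m(\log N)^3/N)^{1/2}$ and takes $C = 2C'$, so you are in fact being slightly more careful than the paper at this final step.
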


\begin{proof}
    Let us write
    \begin{equation*}
        \bG = \sum_{a=1}^N \sum_{b=1}^K \frac{1}{\sqrt{N}}\cdot \bg_{ab} \cdot \ketbra{a}{b} = \sum_{a=1}^N \sum_{b=1}^K \frac{1}{\sqrt{2N}}\cdot \left(\bg^{\R}_{ab} + \bg^{\C}_{ab} \cdot \mathrm{i}\right) \cdot \ketbra{a}{b},
    \end{equation*}
    where each $\bg_{ab} = \frac{1}{\sqrt{2}}\cdot (\bg^{\R}_{ab} + \bg^{\C}_{ab}\cdot \mathrm{i})$ is an independent complex Gaussian random variable with zero mean and unit variance. Then
    \begin{align}\label{eq:gaussian_X_matrix}
        \bX \coloneqq \sum_{i=1}^m E_i \bG\otimes \bra{i} &= \frac{1}{\sqrt{2N}}\cdot \sum_{i=1}^m E_i \cdot \left(\sum_{a=1}^N \sum_{b=1}^K \left(\bg^{\R}_{ab} + \bg^{\C}_{ab} \cdot \mathrm{i}\right) \cdot \ketbra{a}{b}\right)\otimes \bra{i} \nonumber\\
        &= \frac{1}{\sqrt{2N}}\cdot \sum_{a=1}^N\sum_{b=1}^K (\bg^{\R}_{ab} + \bg^{\C}_{ab}\cdot \mathrm{i}) \cdot \sum_{i=1}^m  E_i \ketbra{a}{b}\otimes \bra{i} \\
        &= \sum_{s \in \{\R,\C\}} \sum_{a=1}^N\sum_{b=1}^K \bg^s_{ab}\cdot A^s_{ab}, \nonumber
    \end{align}
    where $A^{\R}_{ab} \coloneqq \frac{1}{\sqrt{2N}}\sum_{i=1}^m  E_i \ketbra{a}{b}\otimes \bra{i}$ and $A^{\C}_{ab} \coloneqq A^{\R}_{ab}\cdot \mathrm{i}$. 
    To apply \Cref{thm:BBvH23_matrix_concentration}, we need to calculate
    \begin{align*}
        M_1\coloneqq \sum_{s \in \{\R, \C\}} \sum_{a=1}^N\sum_{b=1}^K (A^s_{ab})^\dagger \cdot A^s_{ab} \qquad \text{and} \qquad 
        M_2\coloneqq\sum_{s \in \{\R, \C\}} \sum_{a=1}^N\sum_{b=1}^K A^s_{ab} \cdot (A^s_{ab})^\dagger. 
    \end{align*}
    First,
    \begin{align*}
        M_1 &= \sum_{a=1}^N\sum_{b=1}^K (A^{\R}_{ab})^\dagger \cdot A^{\R}_{ab} + (A^{\C}_{ab})^\dagger \cdot A^{\C}_{ab} \\
        &= 2\cdot \sum_{a=1}^N\sum_{b=1}^K (A^{\R}_{ab})^\dagger \cdot A^{\R}_{ab} \\
        &= \frac{1}{N}\cdot \sum_{a=1}^N\sum_{b=1}^K \left(\sum_{i_1=1}^m  E_{i_1} \ketbra{a}{b}\otimes \bra{i_1}\right)^\dagger \cdot \sum_{i_2=1}^m  E_{i_2} \ketbra{a}{b}\otimes \bra{i_2} \\
        &= \frac1N\cdot \sum_{a=1}^N\sum_{b=1}^K \sum_{i_1,i_2=1}^m \ketbra{b}{a}E_{i_1}^\dagger E_{i_2} \ketbra{a}{b} \otimes \ketbra{i_1}{i_2} \\
        &= \frac1N\cdot \sum_{i_1,i_2=1}^m \left(\sum_{a=1}^N \bra{a}E_{i_1}^\dagger E_{i_2} \ket{a}\right)\cdot \left(\sum_{b=1}^K \ketbra{b}\right)
         \otimes \ketbra{i_1}{i_2} \\
        &= \frac1N\cdot \sum_{i_1,i_2=1}^m \tr(E_{i_1}^\dagger E_{i_2}) \cdot I_K\otimes \ketbra{i_1}{i_2} \\
        &= I_{Km}. \tag{$\{E_1, \ldots, E_m\}$ is an orthonormal set of errors}
    \end{align*}
    Second,
    \begin{align*}
        M_2 &= 2\cdot \sum_{a=1}^N\sum_{b=1}^K A^{\R}_{ab}\cdot (A^{\R}_{ab})^\dagger \\
        &= \frac{1}{N}\cdot \sum_{a=1}^N\sum_{b=1}^K \sum_{i_1=1}^m  E_{i_1} \ketbra{a}{b}\otimes \bra{i_1} \cdot \left(\sum_{i_2=1}^m  E_{i_2} \ketbra{a}{b}\otimes \bra{i_2} \right)^\dagger \\
        &= \frac1N\cdot \sum_{a=1}^N\sum_{b=1}^K \sum_{i=1}^m E_i\ketbra{a}{a}E_i^\dagger \\
        &= \frac{K}{N}\cdot \sum_{i=1}^m E_i \cdot \left(\sum_{a=1}^N \ketbra{a} \right)\cdot E_i^\dagger \\
        &= \frac{Km}{N} \cdot I_N \tag{$E_i$'s are unitary so $E_i E_i^\dagger = I_N$}. 
    \end{align*}
    Since we assumed that $Km\leq N$, we have that $\sigma(\bX) = (\max \{\norm{M_1}, \norm{M_2}\})^{1/2} =  1$. 
    We now calculate $v(\bX)$. 
    It follows from \Cref{eq:gaussian_X_matrix} that a vectorization of $\bX$ is given by 
    \begin{align*}
        \ket{\bX} \coloneqq \frac{1}{\sqrt{N}}\cdot \sum_{a=1}^N\sum_{b=1}^K \bg_{ab} \cdot \sum_{i=1}^m  E_i \ket{a} \otimes \ket{b}\otimes \ket{i}. 
    \end{align*}
    Hence,
    \begin{align*}
        \E \ketbra{\bX} &= \frac{1}{N}\cdot \sum_{a_1,a_2=1}^N \sum_{b_1,b_2=1}^K \E \left[\bg_{a_1b_1} \cdot \bg^*_{a_2b_2} \right]\cdot \sum_{i_1,i_2=1}^m E_{i_1}\ketbra{a_1}{a_2}E_{i_2}^\dagger \otimes \ketbra{b_1}{b_2} \otimes \ketbra{i_1}{i_2} \\
        &= \frac{1}{N}\cdot \sum_{a=1}^N \sum_{b=1}^K \E \left[\abs{\bg_{ab}}^2\right]\cdot \sum_{i_1,i_2=1}^m E_{i_1}\ketbra{a} E_{i_2}^\dagger \otimes \ketbra{b} \otimes \ketbra{i_1}{i_2} \\
        &= \frac1N\cdot \sum_{a=1}^N \sum_{b=1}^K \sum_{i_1,i_2=1}^m E_{i_1}\ketbra{a} E_{i_2}^\dagger \otimes \ketbra{b} \otimes \ketbra{i_1}{i_2} \\
        &= \frac1N\cdot \sum_{i_1,i_2=1}^m E_{i_1}E_{i_2}^\dagger \otimes I_K \otimes \ketbra{i_1}{i_2}.
    \end{align*}
    Consider the unitary $U = \sum_{i=1}^m E_i \otimes I_K\otimes \ketbra{i}$. Then
    \begin{equation*}
        U^\dagger \cdot \E \ketbra{\bX} \cdot U = \frac1N\cdot \sum_{i_1,i_2=1}^m I_N \otimes I_K\otimes \ketbra{i_1}{i_2} = \frac{m}{N} \cdot I_N \otimes I_K\otimes \ketbra{+^m}{+^m},
    \end{equation*}
    where $\ket{+^m} = \frac{1}{\sqrt{m}} \sum_{i=1}^m \ket{i}$. So
    \begin{equation*}
        v(\bX) = \norm{\E \ketbra{\bX}}^{1/2} = \norm{U^\dagger \cdot \E \ketbra{\bX} \cdot U}^{1/2} = \sqrt{mN^{-1}}. 
    \end{equation*}
    Next, we apply \Cref{thm:BBvH23_matrix_concentration}, which says that there exists a universal constant $C'>0$ such that for any $t\geq 0$, with probability at least $1-2\cdot\exp(-t^2)$ over the randomness in $\bX$, 
    \begin{equation*}
        1 - \delta = \sqrt{s_{\min}(M_1)} - \delta \leq s_{\min}(\bX) \leq s_{\max}(\bX)\leq \sqrt{s_{\max}(M_1)} + \delta = 1+\delta,
    \end{equation*}
    where 
    \begin{align*}
        \delta &= \norm{M_2}^{1/2} + C' \cdot \left( \sigma(\bX)^{1/2}\cdot v(\bX)^{1/2}\cdot (\log N)^{3/4} +  v(\bX)\cdot t \right) \\
        &= \sqrt{\frac{Km}{N}} + C' \cdot \left(\left(\frac{m}{N}\right)^{1/4} \cdot (\log N)^{3/4} +  \left(\frac{m}{N}\right)^{1/2} \cdot t \right). 
    \end{align*}
    Set $t = (\log N)^{3/2}$. Then we have that with probability at least $1 - 2 \cdot \exp(- (\log N)^3) = 1 - \frac{2}{N^{(\log N)^2}}$, $\bX$ is a $\delta$-approximate isometry with
    \begin{align*}
        \delta &\leq \sqrt{\frac{Km}{N}} + C' \cdot \left(\left(\frac{m}{N}\right)^{1/4} \cdot (\log N)^{3/4} +  \left(\frac{m}{N}\right)^{1/2} \cdot (\log N)^{3/2} \right)\\
        &\leq \sqrt{\frac{Km}{N}} + 2 C' \cdot \sqrt{\frac{m \cdot (\log N)^3 }{N} }.
    \end{align*}
    Setting $C = 2C'$ completes the proof. 
\end{proof}

Before we state our the next lemma, we will define the $\mathrm{isometrize}$ operation. 

\begin{definition}[Isometrize]
Let $T$ be a rank-$\ell$ matrix of size $N \times \ell$ where $N \geq \ell$. Write the SVD of $T$ as $T = W \cdot \Sigma \cdot U$, where $W$ is an $N \times \ell$ isometry, $\Sigma$ is an $\ell \times \ell$ diagonal matrix of the singular values, and $U$ is an $\ell \times \ell$ unitary. Then $\mathrm{isometrize}(T) \coloneqq W \cdot U$.\footnote{Note that since $T$ is rank $\ell$, $\mathrm{isometrize}(T)$ is uniquely defined. In particular, $\mathrm{isometrize}(T) = T \cdot (\sqrt{T^\dagger \cdot T})^{-1}$}
\end{definition}

\begin{lemma}
\label{lemma:isometrize}
    Fix a matrix $G \in \mathbb{C}^{N \times K}$ and let $V = \mathrm{isometrize}(G)$. Suppose $X = \sum_{i \in [m]} E_i G \otimes \bra{i}$ is a $\delta$-approximate isometry for $0 \leq \delta < 1$. Then $Y = \sum_{i \in [m]} E_i V \otimes \bra{i}$ is a $2\delta/(1-\delta)$-approximate isometry.
\end{lemma}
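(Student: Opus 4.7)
The plan is to factor $Y$ cleanly as $X$ times a small correction, and then to bound the correction using the fact that $G$ is itself approximately an isometry.

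First, using the polar form $V = G \cdot (\sqrt{G^\dagger G})^{-1}$ noted in the definition of $\mathrm{isometrize}$, and writing $Q \coloneqq \sqrt{G^\dagger G}$, I would observe the factorization
\begin{equation*}
Y = \sum_{i \in [m]} E_i G Q^{-1} \otimes \bra{i} = \Big(\sum_{i \in [m]} E_i G \otimes \bra{i}\Big)\cdot (Q^{-1} \otimes I_m) = X \cdot (Q^{-1} \otimes I_m),
\end{equation*}
where the middle equality is the mixed-product identity $(AB)\otimes(CD) = (A\otimes C)(B\otimes D)$ applied with $(A,B,C,D) = (E_i G, Q^{-1}, \bra{i}, I_m)$ inside each summand. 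Since the singular values of $Q^{-1} \otimes I_m$ coincide with those of $Q^{-1}$ (each with multiplicity $m$), bounding $Y$ reduces to bounding the singular values of $G$ and chaining them with those of $X$.

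Next I would read off the singular values of $G$ directly from $X$: for any unit $\ket{\phi} \in \mathbb{C}^K$ and any fixed $i \in [m]$, the vector $\ket{\phi} \otimes \ket{i}$ is a unit vector in $\mathbb{C}^{Km}$ satisfying $X(\ket{\phi} \otimes \ket{i}) = E_i G \ket{\phi}$. Because each $E_i$ is unitary and $X$ is a $\delta$-approximate isometry, $\|G\ket{\phi}\| = \|X(\ket{\phi} \otimes \ket{i})\| \in [1-\delta, 1+\delta]$. Hence every singular value of $G$ — and therefore of $Q$ — lies in $[1-\delta, 1+\delta]$, so every singular value of $Q^{-1}$ lies in $[1/(1+\delta), 1/(1-\delta)]$. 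The hypothesis $\delta < 1$ is precisely what is needed to guarantee $Q$ is invertible in the first place.

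Finally, to combine the two factor bounds, for any unit $\ket{c} \in \mathbb{C}^{Km}$ I would set $\ket{d} \coloneqq (Q^{-1} \otimes I_m)\ket{c}$, noting $\|Y\ket{c}\| = \|X\ket{d}\|$ with $\|d\| \in [1/(1+\delta), 1/(1-\delta)]$. This sandwiches $s_{\min}(Y) \geq (1-\delta)/(1+\delta)$ and $s_{\max}(Y) \leq (1+\delta)/(1-\delta)$; a one-line rearrangement shows the deviation of each from $1$ is at most $2\delta/(1-\delta)$, giving the claimed approximate isometry. There is no serious obstacle to overcome — the only nontrivial step is spotting the polar-form factorization $Y = X\cdot(Q^{-1}\otimes I_m)$, after which everything is routine operator-norm manipulation.
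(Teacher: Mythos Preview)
Your proof is correct and essentially identical to the paper's: both show $G$ is a $\delta$-approximate isometry by restricting $X$ to vectors of the form $\ket{\phi}\otimes\ket{i}$, factor $Y = X\cdot(R\otimes I_m)$ where your $Q^{-1}=(\sqrt{G^\dagger G})^{-1}$ is exactly the paper's $R = U^\dagger\Sigma^{-1}U$, and then chain the singular-value bounds to get $(1-\delta)/(1+\delta)\le s_{\min}(Y)\le s_{\max}(Y)\le(1+\delta)/(1-\delta)$. The only cosmetic difference is that the paper writes out the SVD explicitly while you invoke the polar form from the footnote directly.
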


\begin{proof}
    First, we will show that $G$ is a $\delta$-approximate isometry. Since $X$ is a $\delta$-approximate isometry, we have that for all unit vectors in $\ket{v} \in \mathbb{C}^{K} \otimes \mathbb{C}^{m}$, 
    \begin{align}
        1-\delta \leq \norm{X \ket{v}} \leq 1+\delta. \label{eq:X-delta-approx}
    \end{align}
    For any unit vector $\ket{c} \in \mathbb{C}^K$, $\norm{X (\ket{c} \ket{1})} = \norm{E_1 G \ket{c}} = \norm{G \ket{c}}$, and thus by \Cref{eq:X-delta-approx}, we have that
    \begin{equation*}
        1-\delta \leq \norm{G \ket{c}} \leq 1+\delta,
    \end{equation*}
    i.e.\ $G$ is a $\delta$-approximate isometry. 

    Next, write the SVD of $G \in \mathbb{C}^{N \times K}$ as $G = W \cdot \Sigma \cdot U$, where $W \in \mathbb{C}^{N \times K}$ is an isometry, $\Sigma \in \mathbb{C}^{K \times K}$ is the diagonal matrix of singular values, and $U \in \mathbb{C}^{K \times K}$ is a unitary. Since $G$ is a $\delta$-approximate isometry, the diagonal entries of $\Sigma$ are between $1-\delta$ and $1+\delta$. Moreover, since $\delta < 1$, $\Sigma$ is invertible. Define $R \coloneqq U^\dagger \cdot \Sigma^{-1} \cdot U$, and note that $V = \mathrm{isometrize}(G) = W \cdot U$ can be written as $V = G \cdot R$.

    Plugging $V = G \cdot R$ into the definition of $Y$, we see that $Y = X \cdot (R \otimes I_m)$. Since $s_{\min}(R) \geq \frac{1}{1+\delta}$ and $s_{\max}(R) \leq \frac{1}{1-\delta}$, we have
    \begin{align*}
        s_{\min}(Y) &\geq s_{\min}(X) \cdot s_{\min}(R) \geq \frac{1-\delta}{1+\delta} \geq 1 - \frac{2\delta}{1-\delta},\\
        s_{\max}(Y) &\leq s_{\max}(X) \cdot s_{\max}(R) \leq \frac{1+\delta}{1-\delta} = 1 + \frac{2\delta}{1-\delta},
    \end{align*}
    which completes the proof.
\end{proof}

\subsection{Proof of \texorpdfstring{\Cref{thm:haar}}{Theorem 1.11}}

\begin{lemma}
\label{lemma:G-dist}
    Let $\bG$ be an $N \times K$ matrix in which each entry $\bG_{i, j}$ is an independent complex Gaussian with mean 0 and variance $1/N$.
    Then $\bG$ is distributed as $\bW \cdot \bSigma \cdot \bU$, where $\bW$ is an independent $N \times K$ Haar random isometry,
    $\bU$ is an independent $K \times K$ Haar random unitary,
    and $\bSigma$ is an independent random $K \times K$ diagonal matrix which is nonsingular with probability 1.
\end{lemma}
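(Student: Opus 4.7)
The plan is to leverage the bi-unitary invariance of $\bG$: for any fixed $Q \in U(N)$ and $P \in U(K)$, the matrix $Q \bG P$ has the same distribution as $\bG$, since each entry of $\bG$ is an independent complex Gaussian and the joint distribution has covariance $(1/N)\, I_{NK}$, which is preserved. I would begin by noting that $\bG$ is full column rank almost surely, since $\{\det(\bG^\dagger \bG) = 0\}$ is the zero set of a nontrivial polynomial in the real and imaginary parts of the entries; moreover, the singular values of $\bG$ are almost surely distinct. On this full-measure event, the polar decomposition $\bG = \bV \bM$ is well-defined, where $\bM \coloneqq (\bG^\dagger \bG)^{1/2}$ is positive definite and $\bV \coloneqq \bG \bM^{-1}$ is an $N \times K$ isometry.

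Next, I would handle the two unitary factors separately using the two halves of bi-invariance. For the left factor: since $\bM$ depends only on $\bG^\dagger \bG$, it is unchanged by $\bG \mapsto Q \bG$, and left-invariance gives $(Q\bV, \bM) \stackrel{d}{=} (\bV, \bM)$ for every $Q \in U(N)$. Conditioning on $\bM$, the law of $\bV$ is invariant under the transitive left $U(N)$-action on the Stiefel manifold $V_K(\C^N)$, hence conditionally equals Haar measure; so $\bV$ is Haar on the Stiefel manifold and independent of $\bM$. For the right factor: spectrally decompose $\bM = \bU \bSigma \bU^\dagger$ with $\bSigma$ diagonal containing the singular values in decreasing order (a.s.\ distinct), and fix a phase convention on columns of $\bU$ (e.g.\ first nonzero entry positive real) to make the decomposition measurable. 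Right-invariance gives $P^\dagger \bM P \stackrel{d}{=} \bM$, which sends $(\bU, \bSigma) \mapsto (P^\dagger \bU \bD, \bSigma)$ for some phase diagonal $\bD$ restoring the convention. Since Haar measure on $U(K)$ absorbs such phase diagonals, varying $P$ across $U(K)$ shows that $\bU$ is Haar on $U(K)$ and independent of $\bSigma$.

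Finally, I would assemble: set $\bW \coloneqq \bV \bU$. Since $\bV$ is Haar Stiefel independent of $(\bU, \bSigma)$, and Haar measure on $V_K(\C^N)$ is invariant under the right $U(K)$ action, $\bW$ is Haar Stiefel and independent of $(\bU, \bSigma)$. Then $\bG = \bV \bM = \bV \bU \bSigma \bU^\dagger = \bW \bSigma \bU^\dagger$, and since Haar measure on $U(K)$ is invariant under inversion, relabeling yields the desired decomposition. The main obstacle is the phase bookkeeping in the spectral decomposition of $\bM$: the eigenbasis is only defined up to unit-modulus phases on its columns, so one must check that the measurable convention fixing these phases is preserved (up to a Haar-invariant diagonal correction) under the conjugation $\bM \mapsto P^\dagger \bM P$; nearly everything else reduces to the standard fact that a distribution on a homogeneous space invariant under the full group action must be the Haar measure.
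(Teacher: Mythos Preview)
Your proposal is correct and rests on the same underlying fact as the paper---bi-unitary invariance of the Ginibre ensemble---but the execution differs. The paper takes a shorter route: write the SVD $\bG = \bA \bSigma \bB$, then multiply by \emph{independent} Haar randoms $\bU_1 \in U(N)$ and $\bU_2 \in U(K)$, so that $\bU_1 \bG \bU_2 \stackrel{d}{=} \bG$ has SVD $(\bU_1 \bA) \bSigma (\bB \bU_2)$; even conditioned on $\bSigma$, the factors $\bU_1 \bA$ and $\bB \bU_2$ are Haar (by the absorption property of Haar measure) and independent. This randomization trick sidesteps all phase-convention and measurable-selection issues in one stroke.

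Your route---polar decomposition $\bG = \bV \bM$, then spectral decomposition of $\bM$, with separate invariance arguments for $\bV$ and $\bU$---is also valid and arguably more hands-on, but it does require the bookkeeping you flag: fixing a measurable eigenvector convention and checking that $\bM \mapsto P^\dagger \bM P$ perturbs it only by a Haar-absorbable diagonal. Your treatment of this is fine, and the final assembly (that $\bW = \bV \bU$ is Haar Stiefel and independent of $(\bU,\bSigma)$, since $\bV$ is Haar Stiefel independent of $\bM$ and Haar is right-$U(K)$-invariant) goes through cleanly. The paper's approach buys brevity and avoids the selection subtlety entirely; yours makes the source of each independence more explicit.
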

\begin{proof}
    The normalized ensemble of $\sqrt{N}\cdot \bG$ is known as the Ginibre ensemble, where each entry is an independent complex Gaussian random variable with mean $0$ and variance $1$.
    It is well-known that the measure of the Ginibre ensemble is invariant under left and right multiplications by arbitrary unitary matrices \cite[Lemma 1]{Mez07}.
    Hence, for any $\bU_1 \in U(N)$ and $\bU_2 \in U(K)$ drawn independently from the Haar measure, the matrix 
    \begin{equation*}
        \bG' = \bU_1 \cdot \bG \cdot \bU_2
    \end{equation*}
    is also an $N\times K$ random Gaussian matrix with the same distribution as $\bG$. 

    Let $\bG = \bA \cdot \boldsymbol{\Sigma} \cdot \bB$ be the singular value decomposition of $\bG$, where $\bA$ is an $N \times K$ isometry, $\boldsymbol{\Sigma}$ is a $K \times K$ diagonal matrix with nonnegative singular values, and $\bB$ is a $K \times K$ unitary. 
    Whenever a singular value is degenerate, the corresponding singular vectors are chosen Haar randomly within the degenerate subspace.
    The singular value decomposition of $\bG'$ is thus $\bG' = (\bU_1 \bA)\cdot \boldsymbol{\Sigma}\cdot (\bB \bU_2)$.
    Note that even conditioned on the value of $\bSigma$, $\bW\coloneqq \bU_1 \bA$ is  distributed as an independent $N \times K$ Haar random isometry, and $\bU \coloneqq \bB \bU_2$ is distributed as an independent $K \times K$ Haar random unitary.

    Finally, we note that $\bSigma$ being nonsingular is equivalent to $\bG$ having rank less than $K$, which occurs when its $K$ random $N$-dimensional Gaussian columns are linearly dependent. This is an event which almost surely does not happen, which finishes the proof.
\end{proof}

We are now ready to prove our main theorem \Cref{thm:haar}, restated below for convenience. 

\mainhaar*

\begin{proof}
    Let $\bG$ be an $N\times K$ matrix where each entry is an independent complex Gaussian random variable with mean zero and variance $1/N$. By~\Cref{lem:gaussian_concentration}, we know that there exists a constant $C > 0$ such that $\bX = \sum_{i \in [m]} E_i \bG \otimes \bra{i}$ is a $\delta'$-approximate isometry with probability $1-2/N^{(\log N)^2}$, where
    \begin{equation*}
         \delta' \coloneqq \sqrt{\frac{Km}{N}} + C \cdot \sqrt{\frac{m \cdot (\log N)^3 }{N}} = \delta/ 3 < 1/3.
    \end{equation*}
    
    By~\Cref{lemma:G-dist}, $\bG$ is distributed as $\bW \cdot \bSigma \cdot \bU$, where $\bW$ is an independent $N \times K$ Haar random isometry,
    $\bU$ is an independent $K \times K$ Haar random unitary,
    and $\bSigma$ is an independent random $K \times K$ diagonal matrix which is nonsingular with probability 1. This implies that with probability $1$, $\mathrm{isometrize}(\bG) = \bW \cdot \bU$ is well-defined and distributed as a Haar random isometry. 

    Set $\bV = \mathrm{isometrize}(\bG)$. Since $\delta'< 1/3 < 1$, we can apply~\Cref{lemma:isometrize} to say that $\bY = \sum_{i =1}^m E_i \bV \otimes \bra{i}$ is a $2\delta'/(1-\delta')$-approximate isometry with probability $1 - 2/N^{(\log N)^2}$. Note that since $\delta' < 1/3$, we can bound $2\delta' /(1-\delta') < 3\delta' = \delta < 1$. 
    Then by~\Cref{thm:decode}, we have that with probability $1 - 2/N^{(\log N)^2}$, $\bV$ specifies an AQEC that corrects for the errors $\errors = \Span\{E_1,\dots,E_m\}$ with disturbance $\delta$.
\end{proof}

\section*{Acknowledgments} 

We thank Thiago Bergamaschi for helpful discussions.

This work was done while F.M. was a postdoctoral fellow at the Simons Institute for the Theory of Computing, supported by DOE QSA grant FP00010905, NSF QLCI Grant 2016245 and DOE grant DE-SC0024124. 
X.T. is supported by the U.S. Department of Energy, Office of Science, National Quantum Information Science Research Centers, Co-design Center for Quantum Advantage (C2QA) under contract number DE-SC0012704. J.W. is supported by the NSF CAREER award CCF-233971.

\bibliographystyle{alpha}
\bibliography{wright}

\end{document}